\documentclass[runningheads,envcountsect,envcountsame]{svjour3}
\PassOptionsToPackage{usenames,dvipsnames}{xcolor}
\usepackage[T1]{fontenc}
\usepackage{graphicx}
\usepackage{amsmath}
\usepackage{amssymb}
\usepackage[hidelinks]{hyperref}
\usepackage{cleveref}
\usepackage[color=blue!15]{todonotes}
\usepackage{mathtools}
\usepackage{booktabs}
\usepackage{enumitem}
\usepackage[noend]{algpseudocode}
\usepackage{algorithm}
\usepackage{algorithmicx}
\newcommand*{\Let}[2]{\State #1 $\gets$ #2}
\usepackage{tikz}
\usetikzlibrary{automata,positioning,shapes,shapes.geometric,decorations.pathmorphing}
\tikzset{counterexample/.style={color=black!15,line width=2mm,line cap=round,every loop/.style={}}}
\tikzset{fstate/.style={draw, circle}}
\usepackage{tabularx}
\usepackage{stmaryrd}
\usepackage{pgfplots}
\usepackage{subcaption}

\usepackage[labelfont=bf]{caption}

\usepackage[appendix=inline]{apxproof}

\input{fix-counter}

\newcommand{\ts}{\mathcal{T}}
\newcommand{\tsStates}{S}
\newcommand{\tsTrans}{\mathord{\rightarrow}}
\newcommand{\tsTransRel}{\rightarrow}
\newcommand{\tsInit}{s_{\mathit{init}}}
\newcommand{\tsTuple}{(\tsStates, \tsTrans, \tsInit)}

\newcommand{\tsRun}{\rho}
\newcommand{\tsRuns}{\mathit{Runs}}

\newcommand{\tsProp}{\Omega}


\newcommand{\eventually}{\lozenge}
\newcommand{\generally}{\square}

\newcommand{\parity}[1]{\mathit{Parity}\left(#1\right)}

\newcommand{\ggArena}{\mathcal{A}}
\newcommand{\ggPOne}{\text{Sat}}
\newcommand{\ggPTwo}{\text{Unsat}}
\newcommand{\ggSOne}{\tsStates_{\ggPOne}}
\newcommand{\ggSTwo}{\tsStates_{\ggPTwo}}
\newcommand{\ggStates}{\tsStates}
\newcommand{\ggTrans}{\mathord{\rightarrow}}
\newcommand{\ggTransRel}{\rightarrow}
\newcommand{\ggInit}{s_{\mathit{init}}}
\newcommand{\ggArenaTuple}{(\ggSOne, \ggSTwo, \ggTrans, \ggInit)}

\newcommand{\ggPlay}{\rho}
\newcommand{\ggPlays}{\mathit{Plays}}

\newcommand{\ggProp}{\Omega}
\newcommand{\ggGame}{\mathcal{G}}
\newcommand{\ggTuple}{(\ggArena, \ggProp)}

\newcommand{\ggStrat}{\sigma}
\newcommand{\ggStratOne}{\sigma_{\ggPOne}}
\newcommand{\ggStratTwo}{\sigma_{\ggPTwo}}
\newcommand{\ggIndPlay}{\ggPlay} 
\newcommand{\ggValue}{\mathit{val}}
\newcommand{\ggWin}{\mathit{Win}}

\newcommand{\tsGameArena}[2]{{#1}[#2]}

\newcommand{\shap}[1]{\mathit{Shap}_{#1}}
\newcommand{\coopGame}{\gamma}
\newcommand{\coopPlayers}{\mathcal A}
\newcommand{\coopPlayer}{A}

\newcommand{\engrave}{\mathit{Engrave}}

\newcommand{\pesGG}{\mathit{\mathcal{G}_{\mathit{pes}}}}
\newcommand{\optGG}{\mathit{\mathcal{G}_{\mathit{opt}}}}

\newcommand{\resp}[3]{\mathit{Resp}[#1,#2,#3]}
\newcommand{\respOpt}[3]{\mathit{Resp}_\mathit{opt}[#1,#2,#3]}
\newcommand{\respPes}[3]{\mathit{Resp}_\mathit{pes}[#1,#2,#3]}
\newcommand{\respName}{\mathit{Resp}}
\newcommand{\respOptName}{\mathit{Resp}_\mathit{opt}}
\newcommand{\respPesName}{\mathit{Resp}_\mathit{pes}}

\newcommand{\respGroupedName}{\mathit{Resp}^{\stateGroups}}

\newcommand{\stateGt}{\succ}
\newcommand{\stateLt}{\prec}
\newcommand{\stateGeq}{\succcurlyeq}
\newcommand{\stateLeq}{\preccurlyeq}

\newcommand{\winPlay}{\ggPlay_\mathit{win}}
\newcommand{\sbot}{s_\mathit{bottom}}
\newcommand{\sTop}{s_\mathit{top}}
\newcommand{\sskip}{s_\mathit{skip}}

\newcommand{\lowestReach}{\mathord{\downarrow}}
\newcommand{\lowestReachF}[1][F]{\mathord{\downarrow}_{#1}}

\newcommand{\Cmin}{C_\mathit{min}}



\newcommand{\stateGroups}{\mathfrak{S}}
\newcommand{\groupOf}[2][\stateGroups]{[#2]_{#1}}

\newcommand{\hasSP}{\mathit{HasBSP}}
\newcommand{\flatten}{\mathit{flatten}}

\newcommand{\p}{\textsc{P}}
\newcommand{\np}{\textsc{NP}}
\newcommand{\conp}{\textsc{CoNP}}
\newcommand{\npconp}{\np \! \cap \! \conp}
\newcommand{\pspace}{\textsc{PSpace}}
\newcommand{\exptime}{\textsc{ExpTime}}
\newcommand{\twoexptime}{\textsc{2ExpTime}}
\newcommand{\sharpp}{\texttt{\#}\textsc{P}}

\frenchspacing

\bibliographystyle{plainurl}

\begin{document}
\title{Backward Responsibility in Transition Systems Beyond Safety}
\newcommand{\inst}[1]{\textsuperscript{#1}}
\newcommand{\orcidID}[1]{\textsuperscript{[#1]}}
\author{Christel Baier\inst{1,2}
	\and
Rio Klatt\inst{1,3}\and
Sascha Klüppelholz\inst{1}
\and
Max Korn\inst{1}
\and
Johannes Lehmann\inst{1,2}
\thanks{Authors are listed in alphabetical order. The authors are supported
	by the DFG
	through the Cluster of Excellence EXC 2050/2 (CeTI, project ID 390696704, as part of Germany's Excellence Strategy),
    the DFG grant 389792660 as part of TRR~248 (see \url{https://perspicuous-computing.science})
    and by BMBF (Federal Ministry of Education and Research) in DAAD project 57616814 (SECAI, School of Embedded and Composite AI) as part of the program Konrad Zuse Schools of Excellence in Artificial Intelligence.
}}

\authorrunning{C. Baier, R. Klatt, S. Klüppelholz, M. Korn and J. Lehmann}

\institute{\inst{1} Technische Universität Dresden, Dresden, Germany\and
\inst{2}Centre for Tactile Internet with Human-in-the-Loop (CeTI), Dresden, Germany\and 
\inst{3}University of Copenhagen, Denmark
\email{\{christel.baier,sascha.klueppelholz,max.korn,johannes\_alexander.lehmann\}@tu-dresden.de, gtn809@alumni.ku.dk}}

\maketitle
	
\setcounter{footnote}{0}

\vspace{-6em}
\noindent
\textbf{\textcolor{BrickRed}{This paper is an extended and revised version of \cite{Lehmann2025BeyondSafety}. The final paragraph of Section~\ref{sec:introduction} discusses the differences. The full proofs for \cite{Lehmann2025BeyondSafety} remain available in the first arXiv version, available at \mbox{\url{https://arxiv.org/abs/2506.05192v1}}.}}

\vspace{1em}

\begin{abstract}
	As the complexity of software systems rises, methods for explaining their behaviour are becoming ever-more important. When a system fails, it is critical to determine which of its components are responsible for this failure. Within the verification community, one approach uses graph games and Shapley values to ascribe a responsibility value to every state of a transition system. As this is done with respect to a specific failure, it is called \emph{backward responsibility}.
	
	This paper provides tight complexity bounds for the computation of backward responsibility values for reachability, Büchi and parity objectives. For Büchi objectives, a polynomial algorithm is given to determine the set of responsible states, i.e. states with positive responsibility value. To analyse systems that are too large for standard methods, the paper presents a novel refinement algorithm that iteratively finds the set of responsible states. Several heuristics are proposed to guide the refinement algorithm. Its utility is demonstrated with a tool that implements refinement in addition to several other responsibility computation techniques.

\end{abstract}

\section{Introduction}
\label{sec:introduction}
\label{sec:intro}

The increasing complexity of software systems makes them more opaque to their users. To alleviate this, there is growing demand for explaining which internal mechanisms lead to the exhibited behaviour of such systems.

This has given rise to the field of \emph{explainability}, which has produced a wide variety of techniques that apply to all stages of a system's use, from design time over execution time to inspection-time. These techniques can be forward-looking to identify potential causes of failures or backward-looking to explain a specific error scenario that was observed in a system. In recent years, there has been particular interest in explainable AI \cite{Xu2019ExplainableAI} due to the opacity of AI systems. Explainability techniques can help the developer identify faults and the user in understanding the behaviour of a system. Thus, explainability serves as a building block for transparency, trust and accountability in AI systems.

Recent trends towards explainable Formal Methods form one aspect of explainable AI. Techniques that verify whether a system fulfils or violates a specification are themselves often opaque. Providing additional information besides a yes-no-answer can both increase understanding of the underlying system and improve trust in the verification result. To prove that a model satisfies a specification, a verification tool may provide inductive invariants, deductive proofs \cite{MannaPnueliBook} or a witnessing subsystem \cite{Funke2020Witnessing}. To prove the violation of a specification, model checkers and theorem provers can produce counterexamples (see
e.g. \cite{Chockler2016Causality,FromVerificationToCausalityBaseExplications}
for an overview). In the taxonomy of \cite{ForwardBackward}, proofs that a specification is satisfied are forward-looking explications, as they take all possible behaviours of a model into account, while counterexamples and explications derived from them are backward-looking as they aim to explain a specific error scenario.

Counterexamples for regular linear-time properties are finite prefixes of computations, representing an ultimately periodic path that starts in an initial state and eventually enters a loop that is repeated ad infinity and that does not satisfy the property. Despite various techniques to generate short counterexamples \cite{Gastin2004MinimizationOfCounterexamples,Gastin2007MinimalCounterexample,Hansen2006MinimalCounterexamples}, counterexamples can still be very long, making it difficult to identify the reason for the misbehaviour. Motivated by this observation, multiple approaches have been proposed to explain counterexamples by identifying causal relationships or quantifying the degree of responsibility of system states and components. Some of these approaches rely on Halpern and Pearl's notions of actual causality and interventions \cite{HalpernPearlCauses,HalpernPearlExplanations}. The idea is to define causes as minimal sets of items that need to be modified (e.g. switching the truth value of atomic propositions in states) to avoid specification violations. The degree of responsibility of an individual item is then defined as $1/n$, where $n$ is the size of a smallest cause containing that item. This approach has been used to define the degree of responsibility of states in the forward-looking \cite{chockler2008causes} and the backward-looking \cite{Beer2009ExplainingCounterexamplesCausality} setting. Halpern and Pearl's notion of causality is also at the heart of \cite{Koelbl2020DynamicCauses,Caltas2018CausalityForGeneralLTL,LeitnerFischer2014SpinCause,LeitnerFischer2013CausalityChecking}.

Other approaches to identify causes in counterexamples rely on Stalnaker's and Lewis' semantics of counterfactuals in terms of most similar computations that avoid the error scenario. This is formalised by distance functions on computations \cite{Groce2006DistanceMetrics,ParreauxCounterfactualCausalityDistanceFunctions}. Delta debugging \cite{ZellerDeltaDebugging} uses a divide-and-conquer technique to generate the most similar passing run. The difference between passing and failing runs can also be used statistically to assess which components of a program are most suspicious \cite{LandsbergStatisticalFaultLocalisation}.

Another direction to explain counterexamples relies on Shapley values that have been introduced to measure the impact of individual players on the outcome of cooperative games \cite{ShapleyValueOriginal}. A recent discussions on the Shapley value can be found in \cite{ShapleyHandbook,ShapleyValueHandbook}. Such Shapley-value based approaches for formal responsibility notions have been first proposed for stochastic multi-player game structures \cite{GameTheoreticResponsibility} and for the forward-looking approach for temporal properties in transition systems \cite{ImportanceValuesTL}. In recent work, these approaches have been adapted to the backward-looking perspective by explaining counterexamples for safety properties \cite{BackwardRespTS}. This has been extended by lifting responsibility from individual states to higher-level concepts such as components, modules and actors \cite{BackwardRespTS}.

The current paper is in the line of these approaches, mostly of \cite{BackwardRespTS}.
Given a transition system as operational model, a temporal objective representing the specification and a counterexample path, a state $s$ has responsibility for the outcome if there is a set of states $C$ such that $C$ is not sufficient for satisfying the objective in the induced game, while $C \cup \{s\}$ is, i.e. $s$ is responsible whenever it made a difference in whether the objective is satisfied. Responsibility is then quantified using the Shapley value.


\paragraph{Contribution.}
Our contribution is threefold.
\begin{itemize}
	\item First, we provide complexity results for two decision problems (positivity and threshold) as well as the computation problem covering optimistic and pessimistic backward responsibility with reachability, Büchi and parity objectives (\Cref{sec:complexity}). This includes a polynomial-time algorithm that decides the optimistic positivity problem for Büchi objectives (\Cref{alg:opt_pos_buechi}).
	\item Secondly, we present a refinement algorithm (cf. \Cref{alg:refinement} in \Cref{sec:refinement}) for computing the set of states with positive responsibility. We propose a suitable heuristics and analyse its suitability with respect to different objective classes.
	\item Thirdly, we provide a new version of our existing responsibility tool~\cite{BackwardRespTS} that subsumes previous implementations and implements the refinement algorithm and support for Büchi objectives. \Cref{sec:experimental} experimentally evaluates the scalability and performance of the refinement algorithm and compares several heuristics. We show that for large models with few responsible states, the refinement algorithm can improve performance by several orders of magnitude. A case study shows how combining refinement and state grouping \cite{ActorBasedResponsibility} is used to locate an error in the model of a medical lab.
 \end{itemize}
 
\paragraph{Conference version.}
This work is based on the conference publication \emph{Backward Responsibility in Transition Systems Beyond Safety} \cite{Lehmann2025BeyondSafety} that appeared in the proceedings of the \emph{30th International Conference on Formal Methods for Industrial Critical Systems, FMICS 2025}. Compared to \cite{Lehmann2025BeyondSafety}, we now present full proofs, details omitted in the conference paper and provide additional examples and explanations. Beyond that, the new contribution focusses on the theory and practice of the refinement algorithm by
\begin{itemize}
	\item formally defining the \emph{frontier heuristics}, proving that it is well-defined and showing that it always identifies potentially responsible states for safety and reachability objectives,
	\item unifying the implementation of the refinement algorithm with previous implementations from \cite{BackwardRespTS,ActorBasedResponsibility,Lehmann2025BeyondSafety}, enabling their orthogonal contributions to be combined for the first time,
	\item expanding the analysis of refinement heuristics to new models and heuristics and
	\item demonstrating the combination of refinement with state grouping in a case study.
\end{itemize} 


\section{Preliminaries}

We briefly present the notations for transition systems, games on graphs and Shapley values. For more details, see e.g. \cite{ModelCheckingBook}.

\paragraph{Transition systems.}
A \emph{transition system} is a tuple $\ts = \tsTuple$, where $\tsStates$ is a finite set of \emph{states}, $\tsTrans \subseteq \tsStates \times \tsStates$ is the \emph{transition relation} (and we write $s \tsTransRel t$ for $(s, t) \in \tsTrans$) and $\tsInit \in \tsStates$ is the \emph{initial state}. A \emph{run} on $\ts$ is an infinite sequence of states $\tsRun = \tsRun_0 \tsRun_1 \ldots \in \tsStates^\omega$, where $\tsRun_0 = \tsInit$ and for all $i \in \mathbb N$, we have $\tsRun_i \tsTransRel \tsRun_{i+1}$. A run $\tsRun$ is \emph{lasso-shaped} if there are finite state sequences $u \in \tsStates^\ast$, $v \in \tsStates^+$ with $\tsRun = u v^\omega$. A lasso-shaped run is \emph{simple} if every state occurs at most once in $v$ and no state from $u$ also occurs in $v$. The set of all runs of $\ts$ is denoted by $\tsRuns(\ts)$. To simplify notations, we occasionally treat a run $\tsRun$ as a set of states, where $s \in \tsRun$ if there is an $i \in \mathbb N$ with $s = \tsRun_i$. An \emph{objective} is a set of runs $\tsProp \subseteq \tsRuns(\ts)$. A run $\tsRun$ \emph{fulfils objective $\tsProp$} if $\tsRun \in \tsProp$, otherwise, it \emph{violates $\tsProp$}. A transition system \emph{$\ts$ fulfils objective $\tsProp$} if $\tsRuns(\ts) \subseteq \tsProp$.

For $F \in \tsStates$, a \emph{safety objective} is given by $\generally \lnot F = \{ \tsRun \in \tsRuns(\ts) \mid \forall i \in \mathbb N \colon \tsRun_i \notin F \}$, a \emph{reachability objective} is given by $\eventually F = \{ \tsRun \in \tsRuns(\ts) \mid \exists i \in \mathbb N \colon \tsRun_i \in F\}$ and a \emph{Büchi objective} is given by $\generally\eventually F = \{ \tsRun \in \tsRuns(\ts) \mid \forall i \in \mathbb N \ \exists j > i \colon \tsRun_i \in F \}$. For a \emph{colouring function} $c \colon \tsStates \to \mathbb N$, a \emph{parity objective} is given by $\parity{c} = \{\tsRun \in \tsRuns(\ts) \mid \max \{ c(s) \mid s \in \mathit{Inf}(\tsRun) \} \text{ is even}\}$, where $\mathit{Inf}(\tsRun)$ denotes the states visited infinitely often in $\tsRun$.

\paragraph{Graph games.} A \emph{game arena} between two Players $\ggPOne$ and $\ggPTwo$ is a tuple $\ggArena = \ggArenaTuple$, where $\ggSOne$ and $\ggSTwo$ are the \emph{states controlled by Player~$\ggPOne$ and Player~$\ggPTwo$}, respectively (and we write $\ggStates \coloneqq \ggSOne \mathbin{\dot{\cup}} \ggSTwo$), $\ggTrans \subseteq \ggStates \times \ggStates$ is the \emph{transition relation} (and we write $s \ggTransRel t$ for $(s, t) \in \ggTrans$) and $\ggInit \in \ggStates$ is the \emph{initial state}. A \emph{play} on $\ggArena$ is an infinite sequence of states $\ggPlay = \ggPlay_0 \ggPlay_1 \ldots \in \ggStates^\omega$, where $\ggPlay_0 = \ggInit$ and for all $i \in \mathbb N$, we have $\ggPlay_i \ggTransRel \ggPlay_{i+1}$. The set of all plays of $\ggArena$ is denoted by $\ggPlays(\ggArena)$.

Given a transition system $\ts = \tsTuple$ and $C \subseteq \tsStates$, the \emph{corresponding two-player game arena} is defined by $\tsGameArena{\ts}{C} = (C, \tsStates \setminus C, \tsTrans, \tsInit)$.

An \emph{objective} is a set of plays $\ggProp \subseteq \ggPlays(\ggArena)$. A game is a tuple $\ggGame = \ggTuple$ of arena $\ggArena$ and objective $\ggProp$. A play $\ggPlay$ is \emph{winning} (with respect to $\ggProp$) if $\ggPlay \in \ggProp$. A \emph{strategy} for Player~$\ggPOne$ is a function $\ggStratOne \colon \ggSOne \to \ggStates$ such that $s \ggTransRel \ggStratOne(s)$ for all $s \in \ggSOne$. Strategies for Player~$\ggPTwo$ are defined analogously. A pair of strategies $\ggStrat = (\ggStratOne, \ggStratTwo)$ induces a play $\ggIndPlay = \ggPlay_0 \ggPlay_1 \ldots \in \ggStates^\omega$, where $\ggPlay_0 = \ggInit$, $\ggStratOne(\ggPlay_i) = \ggPlay_{i+1}$ for $\ggPlay_i \in \ggSOne$ and $\ggStratTwo(\ggPlay_j) = \ggPlay_{j+i}$ for $\ggPlay_j \in \ggSTwo$. A game is \emph{winning} for Player~$\ggPOne$ if there is a strategy $\ggStratOne$ of Player~$\ggPOne$ such that for all strategies $\ggStratTwo$ of Player~$\ggPTwo$, the induced play $\ggIndPlay$ satisfies the objective $\ggProp$ (and we say that $\ggIndPlay$ is \emph{winning for Player~$\ggPOne$}). Otherwise, the game is winning for Player~$\ggPTwo$. The \emph{value} of the game is defined as $\ggValue(\ggGame) = 1$ if $\ggGame$ is winning for Player~$\ggPOne$ and $\ggValue(\ggGame) = 0$ otherwise. The \emph{winning region} $\ggWin(\ggGame)$ is the set of states from which Player~$\ggPOne$ has a winning strategy.

Safety, reachability, Büchi and parity objectives for games are defined analogously to transition systems.

\paragraph{Shapley values.}
Let $\coopPlayers = \{ \coopPlayer_1, \ldots, \coopPlayer_n\}$ be a set of \emph{agents}. A \emph{simple cooperative game} is a monotonic function $\coopGame \colon 2^{\coopPlayers} \to \{0, 1\}$ and for $C \subseteq \coopPlayers$, we call $\coopGame(C)$ the \emph{payoff of coalition $C$}.
Shapley values \cite{ShapleyValueOriginal} are a well-known concept from cooperative game theory. They distribute the payoff of a cooperative game based on how much each agent contributed to the outcome. For a simple cooperative game $\coopGame$, the Shapley values are defined by the function $\shap{\coopGame} \colon \coopPlayers \to [0, 1]$ given by
\[ \shap{\coopGame}(\coopPlayer) = \sum_{C \subseteq \coopPlayers \setminus \{\coopPlayer\}} \frac{(|\coopPlayers| - |C| - 1)! \cdot |C|!}{|\coopPlayers|!} (\coopGame(C \cup \{\coopPlayer\}) - \coopGame(C)).\]

For $\coopPlayer \in \coopPlayers$ and $C \subseteq \coopPlayers \setminus \{\coopPlayer\}$, we call $(C, \coopPlayer)$ a switching pair if $\coopGame(C \cup \{\coopPlayer\}) - \coopGame(C) = 1$, i.e. if $\coopGame(C \cup \{\coopPlayer\}) = 1$ and $\coopGame(C) = 0$. Switching pairs correspond to the non-zero summands in the above sum, which implies that $\coopPlayer \in \coopPlayers$ has positive responsibility if and only there exists a switching pair $(C, \coopPlayer)$ for some $C \subseteq \coopPlayers$.

\section{Responsibility using the Shapley value}

This section revisits the definition of backward responsibility from~\cite{BackwardRespTS,ActorBasedResponsibility}, which extends forward responsibility as defined by \cite{ImportanceValuesTL}. Backward responsibility differs from forward responsibility by assessing responsibility with respect to a specific run $\tsRun$. The run $\tsRun$ is viewed as a counterexample illustrating a specification violation. Throughout this section, let $\ts = \tsTuple$ be a transition system with $\omega$-regular objective $\tsProp$ and simple lasso-shaped run $\tsRun = \tsRun_0 \tsRun_1 \ldots$ violating $\tsProp$.

The restriction to simple lasso-shaped runs ensures that $\tsRun$ can be reproduced by a positional strategy, i.e. it is not necessary to remember whether a state has been visited previously. The objective classes analysed in this paper can be refuted with simple lasso-shaped runs.

To incorporate the run $\tsRun$ into the system's analysis, \cite{BackwardRespTS} introduces an \emph{engraving} construction. For every state visited in $\tsRun$, all transitions are removed except for the one that follows $\tsRun$.
The construction is only applied to the states of $\tsRun$ that are not in a given set $C$.

\begin{definition}[Engraving construction]
	Let $\ts$ and $\tsRun$ be as above and let $C \subseteq \tsStates$. The \emph{engraved transition system} is given by $\engrave(\ts, \tsRun, C) = (\tsStates, \tsTrans', \tsInit)$ with $\tsTrans' = \{(s, t) \in \tsTrans \mid s \notin \tsRun \setminus C \} \hspace{0.5em} \cup\hspace{0.5em} \{ (\tsRun_i, \tsRun_{i+1}) \mid i \in \mathbb N \text{ and } \tsRun_i \notin C \}$.
\end{definition}

\begin{figure}[t]
    \centering
    \begin{subfigure}[t]{.49\textwidth}
        \centering
        \begin{tikzpicture}[shorten >=1pt,node distance=1.3cm,on grid,auto, state/.style={circle,inner sep=1pt}] 
            \node[state,initial,initial text=] (s0)   {$s_0$};
            \node[state] (s1) [above right=0.80cm and 0.70cm of s0]  {$s_1$};
            \node[state] (s2) [right= of s0]  {$s_2$};
            \node[state] (s3) [right= of s1]  {$s_3$};
            
            \node[state] (s4) [right= of s2]  {$s_4$};
            \node[state] (s5) [right= of s3]  {$s_5$};
            
            \node[state] (s6) [right= of s4]  {$\lightning$};
            
            \path (s0) edge[counterexample] node  {} (s2);
            \path (s2) edge[counterexample] node  {} (s4);
            \path (s4) edge[counterexample] node  {} (s6);
            \path (s6) edge [loop right, counterexample] node  {} (s63);
            
            \path[->] (s0) edge [bend right=15] node  {} (s1);
            \path[->] (s0) edge node  {} (s2);
            
            \path[->] (s1) edge [bend right=15] node  {} (s0);
            \path[->] (s1) edge node  {} (s3);
            
            \path[->] (s2) edge node  {} (s1);
            \path[->] (s2) edge node  {} (s3);
            \path[->] (s2) edge node  {} (s4);
            
            \path[->] (s3) edge node  {} (s5);
            
            \path[->] (s4) edge node  {} (s3);
            \path[->] (s4) edge node  {} (s5);
            \path[->] (s4) edge node  {} (s6);
            
            \path[->] (s5) edge node  {} (s6);
            
            \path[->] (s6) edge [loop right] node  {} (ss6);
        \end{tikzpicture}
        \caption{before engraving}
    \end{subfigure}%
    \begin{subfigure}[t]{.49\textwidth}
        \centering
        \begin{tikzpicture}[shorten >=1pt,node distance=1.3cm,on grid,auto, state/.style={circle,inner sep=1pt}] 
            \node[state,initial,initial text=] (s0)   {$s_0$};
            \node[state] (s1) [above right=0.80cm and 0.70cm of s0]  {$s_1$};
            \node[state] (s2) [right= of s0]  {$s_2$};
            \node[state] (s3) [right= of s1]  {$s_3$};
            
            \node[state] (s4) [right= of s2]  {$s_4$};
            \node[state] (s5) [right= of s3]  {$s_5$};
            
            \node[state] (s6) [right= of s4]  {$\lightning$};
            
            \path (s0) edge[counterexample] node  {} (s2);
            \path (s2) edge[counterexample] node  {} (s4);
            \path (s4) edge[counterexample] node  {} (s6);
            \path (s6) edge [loop right, counterexample] node  {} (s63);
            
            \path[->] (s0) edge node  {} (s2);
            
            \path[->] (s1) edge [bend right=15] node  {} (s0);
            \path[->] (s1) edge node  {} (s3);
            
            \path[->] (s2) edge node  {} (s4);
            
            \path[->] (s3) edge node  {} (s5);
            
            \path[->] (s4) edge node  {} (s3);
            \path[->] (s4) edge node  {} (s5);
            \path[->] (s4) edge node  {} (s6);
            
            \path[->] (s5) edge node  {} (s6);
            
            \path[->] (s6) edge [loop right] node  {} (ss6);
        \end{tikzpicture}
        \caption{after engraving}
    \end{subfigure}
    \caption{Engraving path $\tsRun = s_0 s_2 s_4 \lightning^\omega$ in a transition system with objective $\generally \lnot \lightning$, with coalition $C=\{s_4\}$. Outgoing transitions of the states of $\tsRun$ are removed if they do not follow $\tsRun$, except for $s_4$, because $s_4 \in C$.
    }
    \label{fig:engraving_example}
\end{figure}
\begin{example}
    Figure~\ref{fig:engraving_example} illustrates the engraving construction in a transition system with objective $\generally \lnot \lightning$ for path $\tsRun = s_0 s_2 s_4 \lightning^\omega$ and $C = \{s\}$. The outgoing transitions of $s_0$ and $s_2$ that diverge from $\tsRun$ are removed. State $s_4$ retains all outgoing transitions, as it is in $C$. This illustrates that the engraving construction enforces following the path, except in the states in $C$.
\end{example}

To determine the influence of some $C \subseteq \tsStates$ on fulfilling the objective, a graph game is constructed. For the behaviour of the remaining states $\tsStates \setminus C$, one can either ``optimistically'' assume they also try to fulfil the objectives or ``pessimistically'' assume that they try to prevent the objective from being fulfilled.

\begin{definition}[Graph game]
    Let transition system $\ts$, objective $\tsProp$ and run $\tsRun$ be as above and let $C \subseteq \tsStates$. Let $\ts' = \engrave(\ts, \tsRun, C)$ be the transition system obtained by engraving $\tsRun$ in $\ts$ except for the states in $C$. The \emph{pessimistic graph game} is defined as $\pesGG[\ts, \tsProp, \tsRun, C] = (\tsGameArena{\ts'}{C}, \tsProp)$ and the \emph{optimistic graph game} is defined as $\optGG[\ts, \tsProp, \tsRun, C] = (\tsGameArena{\ts'}{C'}, \tsProp)$, where $C' = C \cup \{ s \in \tsStates \mid s \notin \rho \}$.
    When $\ts$, $\tsProp$ and $\tsRun$ are clear from context, we say that \emph{$C$ is winning in the pessimistic setting} if Player~$\ggPOne$ wins $\pesGG[\ts, \tsProp, \tsRun, C]$ (and analogously for the optimistic case).
\end{definition}

In the optimistic graph game, only states with engraved behaviour are controlled by Player $\ggPTwo$. For simple lasso-shaped runs, Player $\ggPTwo$ thus has no decision power. 
Player $\ggPTwo$ may win nonetheless if the engraving prevents Player $\ggPOne$ from fulfilling the objective. The Shapley value now yields responsibility values.

\begin{example}
    Recall the example from Figure~\ref{fig:engraving_example}, with $C = \{s_4\}$. Even if Player $\ggPOne$ controls all states, reaching $\lightning$ is inevitable and thus, Player $\ggPTwo$ wins without controlling any states.
    On the other hand, for engraving with $C = \{s_2\}$, the outgoing transitions of $s_2$ are preserved, allowing Player $\ggPOne$ to force the loop $(s_0 s_2 s_1)^\omega$ and win.
\end{example}

\begin{definition}[Responsibility values]
	Let transition system $\ts$, objective $\tsProp$ and run $\tsRun$ be as above and let $C \subseteq \tsStates$. The \emph{responsibility values} $\respPes{\ts}{\tsProp}{\tsRun}$ and $\respOpt{\ts}{\tsProp}{\tsRun}$ are defined by the Shapley value of a function $2^\tsStates \to \{0, 1\}$ that maps $C \subseteq \tsStates$ to $\ggValue(\pesGG[\ts, \tsProp, \tsRun, C])$ in the pessimistic case and to $ \ggValue(\optGG[\ts, \tsProp, \tsRun, C])$ in the optimistic case.
\end{definition}

\begin{remark}
    Similar to the above, many other works also use the Shapley value for responsibility allocation \cite{ImportanceValuesTL,BackwardRespTS,FriedenbergBlameworthiness,YazdanpanahStrategicResponsibilityImperfectInformation,FromVerificationToCausalityBaseExplications} or rely on similar power indicies such as the \emph{Banzhaf power index} \cite{banzhaf1964weighted}. An alternative technique allocates responsibility based on the smallest switching pair \cite{chockler2004responsibility,aleksandrowicz2017computational}. This is done by determining the smallest $C\subseteq \tsStates$ such that $(C, s)$ is a switching pair and then defining the responsibility of $s$ as $\frac{1}{\left|C\right| + 1}$. However, the latter approach allocates more responsibility to a state with a single switching pair of size $3$ than to a state with hundreds of switching pairs of size $4$ -- the global view of the Shapley value, on the other hand, ensures that the latter state has higher responsibility.
\end{remark}

In the optimistic case, Player~$\ggPOne$ controls every state $s \in \tsStates \setminus \tsRun$ and therefore $\optGG[\ts, \tsProp, \tsRun, C] = \optGG[\ts, \tsProp, \tsRun, C \cup \{s\}]$. These states have no responsibility:

\begin{lemma}[Responsible states for optimistic responsibility]
	\label{lem:responsible_optimistic}
    Let $\ts$, $\tsProp$ and $\tsRun$ be as above. In the optimistic responsibility setting, only states on $\tsRun$ can have a positive responsibility value.
\end{lemma}
\begin{proof}
	This has been shown for safety objectives in \cite{BackwardRespTS}. The proof is unchanged for other objectives:
	Let $\ts = \tsTuple$ be a transition system with objective $\tsProp$ and run $\tsRun$ violating $\tsProp$. Let $s \notin \tsRun$ and $C \subseteq \tsStates$. Then $\engrave(\ts, \tsRun, C \cup \{s\}) = \engrave(\ts, \tsRun, C)$, as engraving only removes transitions from states in $C \cap \tsRun = \varnothing$. Therefore, 
	\begin{align*}
		& \optGG[\ts, \tsProp, \tsRun, C \cup \{s\}] \\
		= \quad & (\tsGameArena{\engrave(\ts, \tsRun, C \cup \{s\})}{C \cup \{s \in S \mid s \notin \tsRun\} \cup \{s\}}, \tsProp) \\
		= \quad & (\tsGameArena{\ts'}{C \cup \{s \in S \mid s \notin \tsRun\}}, \tsProp) \\
		= \quad & \optGG[\ts, \tsProp, \tsRun, C].
	\end{align*}
	Due to this, $(C, s)$ is not a switching pair and therefore, $s$ has no responsibility.
\end{proof}


\section{Complexity of responsibility computation problems}
\label{sec:complexity}

This section studies the computational complexity of determining the set of responsible states and computing their responsibility value. Throughout this section, let $\ts = \tsTuple$ be a transition system with $\omega$-regular objective $\tsProp$ and simple lasso-shaped run $\tsRun$ violating $\tsProp$. Let $\respName \in \{ \respOptName, \respPesName \}$.

{
    \renewcommand{\pspace}{\textsc{PSp}}
    \renewcommand{\exptime}{\textsc{ExpT}}
    \renewcommand{\twoexptime}{\textsc{2ExpT}}
    \newcommand{\gamesolving}[1]{} 
    \newcommand{\prop}[2][Prop.]{\,(#1\,\ref{#2})}
    \newcommand{\shiftColTwo}{\hspace{-0.70em}}
    \setlength{\tabcolsep}{1.7pt}
    \begin{table}[t]
        \caption{Overview of complexity results for the \textbf{pos}itivity and \textbf{comp}utation problem. The results are for completeness, except for those marked by $\in$, which are for inclusion in the respective class.}
        \label{tab:overview}
        \begin{tabular}{lcccccc}\toprule
            & \multicolumn{2}{c}{\hspace{-0.3em}Forward \cite{ImportanceValuesTL}} & \multicolumn{2}{c}{Backward, opt}& \multicolumn{2}{c}{Backward, pess}
            \\\cmidrule(lr){2-3}\cmidrule(lr){4-5}\cmidrule(lr){6-7}
            & \gamesolving{Game solving} Pos. & \hspace{-0.3em}Comp. &  Pos.  & Comp. &  Pos.  & Comp. \\\midrule
            Safety\hspace{1em}
            & \gamesolving{$\p$}
            $\np$
            & \shiftColTwo{}$\sharpp$
            & \hspace{-0.5em}$\in\!\p$ \cite{BackwardRespTS}
            & $\in\!\p$ \cite{BackwardRespTS}
            &  $\np$ \cite{BackwardRespTS}
            & $\sharpp$ \cite{BackwardRespTS} \\
            Reach.
            & \gamesolving{$\p$}
            $\np$
            & \shiftColTwo{}$\sharpp$
            & \hspace{-0.85em}$\in\!\p$\prop{prop:compl_reach_opt_pos}
            & $\in\!\p$\prop{prop:compl_reach_opt_comp}
            & $\np$\prop{prop:compl_reach_pes_pos}
            & $\sharpp$\prop{prop:compl_reach_pes_comp} \\
            Büchi
            & \gamesolving{$\p$}
            $\np$
            & \shiftColTwo{}$\sharpp$
            & \hspace{-0.93em}$\in\!\p$\prop{prop:compl_buechi_opt_pos}
            & $\in\!\sharpp$\prop{prop:compl_buechi_opt_comp}
            & $\np$\prop{prop:compl_buechi_pes_pos}
            &  $\sharpp$\prop{prop:compl_buechi_pes_comp} \\
            Parity
            & \gamesolving{\hspace{-0cm} $\in \!\! \npconp$}
            $\np$
            & \shiftColTwo{}$\sharpp$
            & \hspace{-1.05em}$\in \! \np$\prop{prop:compl_parity_opt_pos}
            & $\in\!\sharpp$\prop{prop:compl_parity_opt_comp}
            & $\np$\prop{prop:compl_parity_pes_pos}
            & $\sharpp$\prop{prop:compl_parity_pes_comp} \\
            \bottomrule
        \end{tabular}
    \end{table}
}
\begin{definition}[Responsibility problems]
    Given a transition system $\ts$, an $\omega$-regular objective $\tsProp$, a run $\tsRun$ violating $\tsProp$ and a state $s \in \tsStates$,
    \begin{itemize}
        \item
        the \emph{positivity problem} asks whether $\resp{\ts}{\tsProp}{\tsRun}(s) > 0$,
        \item
        the \emph{threshold problem} asks whether $\resp{\ts}{\tsProp}{\tsRun}(s) > t$ for a given threshold $t \in [0,1]$ and
        \item
        the task of the \emph{computation problem} is to compute $\resp{\ts}{\tsProp}{\tsRun}(s)$.
    \end{itemize}
\end{definition}

	Table~\ref{tab:overview} gives an overview of the complexity of these problems for different objective classes, both for optimistic and pessimistic background responsibility. For completeness, the figure also includes the results for forward responsibility as given in \cite{ImportanceValuesTL}. For forward responsibility, the input of the problems does not include a run $\tsRun$ violating $\tsProp$ and the engraving step is omitted. The remaining steps are the same for the forward and backward view.
	
	Recall that \sharpp{} is a class of counting problems that ask how many accepting runs a non-deterministic Turing machine with polynomial runtime have (see e.g. Chapter 9.1 in \cite{arora2009computational} for a formal introduction to \sharpp{}). To allow analysing the complexity of the computation problem, we treat it as a counting problem in the following, where the goal is to count the number of switching pairs (instead of computing the responsibility value).

\subsection{Determining the set of responsible states}

Deciding the optimistic positivity problem for safety objectives requires polynomial time \cite{BackwardRespTS}. This section extends this complexity result to reachability and Büchi objectives.

\begin{proposition}
\label{prop:compl_reach_opt_pos}
For reachability objectives, the optimistic positivity problem can be decided in polynomial time.
\end{proposition}

\begin{proof}
       {\newcommand{\winRun}{\tsRun_\mathit{win}}
       	Let $\tsProp = \eventually F$ be a reachability objective. In the following, we consider optimistic responsibility. We first show that, if $(C, s)$ is a switching pair for some $C \subseteq \tsStates$, then $(\varnothing, s)$ is also a switching pair.
        
        Let $(C, s)$ be a switching pair. Assume that $(\varnothing, s)$ is not a switching pair. Then there must be some non-empty $C' \subseteq C$ such that $(C', s)$ is a switching pair and $C'$ is minimal. By Lemma~\ref{lem:responsible_optimistic}, all states in $C'$ are on $\tsRun$. As $(C', s)$ is a switching pair, Player~$\ggPOne$ wins $\optGG[\ts, \eventually F, \tsRun, C' \cup \{s\}]$. In optimistic games, Player~$\ggPTwo$ does not control any states with more than one outgoing transition, so the winning strategy of Player~$\ggPOne$ induces a finite run prefix $\winRun$ that reaches $F$. Let $s' \in C' \cup \{s\}$ be the last state in $\winRun$ that also occurs in $C$. Because $s'$ is on $\tsRun$ and there is a path from $s'$ to $F$ that does not visit $\tsRun$ again, $\{s'\}$ is winning on its own. Therefore, $(\varnothing, s')$ is a switching pair and the assumption that $C'$ is minimal was wrong.
    }
    
    Therefore, to decide whether a state $s$ has positive responsibility, it is sufficient to verify that that Player~$\ggPOne$ wins $\optGG[\ts, \eventually F, \tsRun, \{s\}]$. This can be checked in polynomial time for reachability objectives.
\end{proof}

Now let $\tsProp = \generally \eventually F$ be a Büchi objective. Unlike for reachability objectives, $(\varnothing, s)$ is no longer necessarily a switching pair for a responsible state $s$, as shown in Figure~\ref{fig:buechi_example}: $(\{s_0\}, s_1)$ is a switching pair, but $(\varnothing, s_1)$ is not: Controlling $s_0$ and $s_1$ allows Player~$\ggPOne$ to produce the accepting run $(s_0 s_5 s_1 s_4)^\omega$, while controlling only $s_1$ makes $s_5$ unreachable due to the engraving. The reachability positivity algorithm is therefore not applicable to transition systems with a Büchi objective. Nonetheless, there is an efficient algorithm for the optimistic positivity problem.

\begin{figure}[t]
    \centering
    \begin{minipage}{.40\textwidth}
        \centering
        \begin{tikzpicture}[shorten >=1pt,node distance=1.1cm,on grid,auto, state/.style={circle,inner sep=1pt}] 
            \node[state,initial,initial text=] (s0)   {$s_0$};
            \node[state] (s1) [right=1.8cm of s0] {$s_1$};
            \node[state,fstate] (s2) [right=0.9cm of s1] {$s_2$};
            \node[state] (s3) [right=0.9cm of s2] {$s_3$};
            \node[state] (s4) [above right=0.75cm and 0.9cm of s0] {$s_4$};
            \node[state,fstate] (s5) [below right=0.75cm and 0.9cm of s0] {$s_5$};
            
            \path (s0) edge[counterexample] node  {} (s1);
            \path (s1) edge[counterexample] node  {} (s2);
            \path (s2) edge[counterexample] node  {} (s3);
            \path (s3) edge [loop below, counterexample] node  {} (s3);
            
            \path[->] (s0) edge node  {} (s1);
            \path[->] (s1) edge node  {} (s2);
            \path[->] (s2) edge node  {} (s3);
            \path[->] (s0) edge node  {} (s5);
            \path[->] (s4) edge node  {} (s0);
            \path[->] (s5) edge node  {} (s1);
            \path[->] (s1) edge[] node  {} (s4);
            \path[->] (s4) edge[bend left] node  {} (s1);
            \path[->] (s2) edge [loop below] node  {} (s2);
            \path[->] (s3) edge [loop below] node  {} (s3);
            
            \node[state,fstate,fill=white] (s2dummy) [right=0.9cm of s1] {$s_2$};
        \end{tikzpicture}
    \end{minipage}%
    \begin{minipage}{.59\textwidth}
        \centering
        \setlength{\tabcolsep}{5pt}
        \begin{tabular}{lrrrrrr}
            & $s_0$ & $s_1$ & $s_2$ & $s_3$ & $s_4$ & $s_5$\\
            \midrule
            Optimistic & $1/6$ & $1/6$ & $2/3$ & $0$ & $0$ & $0$ \\
            Pessimistic & $1/12$ & $1/12$ & $3/4$ & $0$ & $1/12$ & $0$ \\
            \bottomrule
        \end{tabular}
    \end{minipage}
    \caption{A transition system with Büchi objective $\generally\eventually \{s_2, s_5\}$ and violating run $\tsRun s_0 s_1 s_2 s_3^\omega$. The corresponding optimistic and pessimistic responsibility values are depicted on the right. Unlike for safety and reachability, not all states with positive optimistic responsibility have the same amount of responsibility.
    }
    \label{fig:buechi_example}
\end{figure}

{
    \begin{proposition}
        \label{prop:compl_buechi_opt_pos}
        The optimistic positivity problem for Büchi objectives is decidable in polynomial time.
    \end{proposition}
    
    \begin{proofsketch}
    	The proof relies on the fact that any non-singleton minimal winning coalition has a specific shape (illustrated in \Cref{fig:buechi_pos_struct} and formalised in Lemma~\ref{lem:buechi_opt_properties}). This induces an polynomial algorithm (cf. \Cref{alg:opt_pos_buechi}) that first checks whether a singleton minimal winning coalition exists and, if not, attempts to construct a suitable non-singleton minimal winning coalition. Correctness is shown in Proposition~\ref{prop:opt_pos_buechi_correctness}.
    \end{proofsketch}
    
    For the proof, the following notation helps relate the states of $\tsRun$. For states $s, t \in \tsRun$ on the run, $s \stateLeq t$ if $t$ is reachable from $s$ in $\engrave(\ts, \tsRun, \tsStates)$ and $s \stateLt t$ if $s \stateLeq t$ and not $t \stateLeq s$. A simple reachability analysis computes $\stateLt$ and $\stateLeq$ in polynomial time. We define the \emph{lowest reachable state from~$s$} as $\lowestReach(s)=\tsRun_i$ such that Player~$\ggPOne$ can reach $\tsRun_i$ in $\optGG[\ts, F, \tsRun, \{ s\}]$ and $i$ is minimal. For $F \subseteq \tsStates$, the \emph{lowest reachable state from $s$ through $F$} is defined as $\lowestReachF(s) = \tsRun_j$ such that there is an $f \in F$ for which Player~$\ggPOne$ can reach $f$ from $s$ and $t$ from $f$ in $\optGG[\ts, F, \tsRun, \{ s\}]$ and $j$ is minimal. If no such state $\tsRun_j$ exists, then $\lowestReachF(s) = \bot$.

    \begin{figure}[t]
    	\centering
    	\pgfdeclarelayer{ce}
    	\pgfsetlayers{ce,main}
    	\begin{tikzpicture}[shorten >=1pt,node distance=1.0cm,on grid,auto, state/.style={rectangle,inner sep=1pt}] 
    		\node[state,initial,initial text=] (downsn) {$\lowestReach(s_n)$};
    		\node[state] (sbot) [right=1.5cm of downsn] {$\sbot$};
    		\node[state] (downs2) [right=1.3cm of sbot] {};
    		\node[state] (sn) [right=0.5cm of downs2] {};
    		\node[state] (downs1) [right=3.0cm of sbot] {$\lowestReach(s_1)$};
    		\node[state] (s2) [right=of downs1] {$s_2$};
    		\node[state] (downstop) [right=of s2] {$\lowestReach(\sTop)$};
    		\node[state] (s1) [right=of downstop] {$s_1$};
    		\node[state] (upsbot) [right=1.5cm of s1] {$\lowestReachF(\sbot)$};
    		\node[state] (stop) [right=1.5cm of upsbot] {$\sTop$};
    		\node[state,fstate] (sf) [below right=0.6cm and 0.8cm of downs1] {$s_F$};
    		
    		\begin{pgfonlayer}{ce}
    			\path[] (downsn) edge[counterexample] node  {} (sbot);
    			\path[] (sbot) edge[counterexample] node  {} (downs1);
    			\path[] (downs1) edge[counterexample] node  {} (s2);
    			\path[] (s2) edge[counterexample] node  {} (downstop);
    			\path[] (downstop) edge[counterexample] node  {} (s1);
    			\path[] (s1) edge[counterexample] node  {} (upsbot);
    			\path[] (upsbot) edge[counterexample] node  {} (stop);
    			\path[] (stop) edge[loop right,counterexample] node  {} (stop);
    		\end{pgfonlayer}
    		
    		\path[->] (downsn) edge node  {} (sbot);
    		\path[->,dash pattern=on 14pt off 3pt on 3pt off 3pt on 3pt off 3pt on 3pt off 3pt on 3pt off 3pt on 3pt off 3pt on 14pt] (sbot) edge node  {} (downs1);
    		\path[->] (downs1) edge node  {} (s2);
    		\path[->] (s2) edge node  {} (downstop);
    		\path[->] (downstop) edge node  {} (s1);
    		\path[->] (s1) edge node  {} (upsbot);
    		\path[->] (upsbot) edge node  {} (stop);
    		\path[->] (stop) edge[loop right] node  {} (stop);
    		\path[->] (downsn) edge node  {} (sbot);
    		
    		\path[->] (stop) edge[bend right=27] node  {} (downstop);
    		\path[->] (s1) edge[bend right=38] node  {} (downs1);
    		\path[->, dash pattern=on 50pt off 3pt on 3pt off 3pt on 3pt off 3pt on 3pt off 3pt on 3pt off 3pt on 3pt off 3pt on 3pt off 3pt on 3pt off 3pt on 3pt off 3pt on 3pt off 3pt on 3pt off 3pt] (s2) edge[bend right=51] node  {} (downs2);
    		\path[->, , dash pattern=on 3pt off 3pt on 3pt off 3pt on 3pt off 3pt on 3pt off 3pt on 3pt off 3pt on 3pt off 3pt on 3pt off 3pt on 3pt off 3pt on 3pt off 3pt on 3pt off 3pt on 50pt] (sn) edge[bend right=36] node  {} (downsn);
    		
    		\path[->] (sbot) edge[bend right=10] node  {} (sf);
    		\path[->] (sf) edge[bend right=10] node  {} (upsbot);
    	\end{tikzpicture}
    	\caption{For Büchi objectives, minimal non-singleton winning coalitions matches the depicted layout. Each arrow $s \rightarrow t$ indicates that there is a path from $s$ to $t$, potentially visiting other states that are not depicted.
    	}
    	\label{fig:buechi_pos_struct}
    \end{figure}
    
    The algorithm for the optimistic positivity problem relies on the insight that if $C \cup \{s\}$ is winning and minimal,
    then either $C = \varnothing$ or the states are arranged as shown in Figure~\ref{fig:buechi_pos_struct}. We have $C \cup \{s\} = \{\sbot, \sTop, s_1, \ldots, s_n\}$, where
    \begin{itemize}
        \item $\sbot$ can reach some Büchi state $s_F$ and then continue to some state above $\lowestReachF(\sbot)$, from which $\sTop$ is reached by following $\tsRun$,
        \item $\sTop$ can jump down to an earlier state of $\tsRun$ to some state above $\lowestReach(\sTop)$, from which $s_1$ is reached by following $\tsRun$,
        \item $s_1$ can jump down to a state above $\lowestReach(s_1)$, from which $s_2$ is reached, and so on, until $\lowestReach(s_n)$ is reached, and
        \item from $\lowestReach(s_n)$, state $\sbot$ is reachable again by following $\tsRun$.
    \end{itemize}
    
    It is easy to see that these states are able to enforce the winning loop.
    Every $t \in \{\sTop, s_1, \ldots, s_n\}$ must ``skip'' some state that no other state in $C \cup \{s\}$ skips to ensure minimality of $C$.
    For example, in Figure~\ref{fig:buechi_pos_struct}, $\lowestReach(\sTop)$ is skipped only by the jump from $s_1$ to $\lowestReach(s_1)$, so without controlling $s_1$, it would be impossible to go from $\sTop$ to $\sbot$. State $\sbot$ is always necessary for winning to ensure that a Büchi state is reached. Minimality additionally required that no state in $C$ is winning on its own. The following lemma formalises this structure.
    
    \begin{lemma}[Properties of a minimal coalition]
        \label{lem:buechi_opt_properties}
        Let transition system $\ts$, objective $\generally \eventually F$ and run $\tsRun$ be as above. If $C \subseteq \tsStates$ is a minimal winning coalition and $|C| > 1$, then
        \begin{enumerate}[label=(\alph*)]
            \item \label{lem:buechi_opt_properties_loop}
            Player~$\ggPOne$ has a winning play $\winPlay = u v^\omega$ in $\optGG[\ts, \omega, \tsRun, C]$ such that $u, v \in \tsStates^\ast$ and every state $s \in C$ occurs exactly once in the loop $v$ in $\winPlay$,
            \item \label{lem:buechi_opt_properties_sbot} there exists exactly one state $\sbot \in C$ such that $\lowestReachF[F](\sbot) \stateLeq \sTop$ for some state $\sTop \in C$,
            \item \label{lem:buechi_opt_properties_sbot_lowest} we have $\sbot \stateLt s \stateLt \sTop$ for every state $s \in C \setminus \{\sbot, \sTop\}$ and
            \item \label{lem:buechi_opt_properties_skip} for every state $s_i \in C \setminus \{\sbot\}$, there exists a state $\sskip \in \tsRun$ with $\sbot \stateLt \sskip \stateLeq \lowestReachF[F](\sbot)$ such that $\lowestReach(s)  \stateLt \sskip{} \stateLeq s$ holds for $s$ and no other $s' \in C \setminus\{\sbot\}$.
        \end{enumerate}
    \end{lemma}
    
    
    \begin{proof}
        We first show \ref{lem:buechi_opt_properties_loop}. Because $C$ is a minimal winning coalition, Player~$\ggPOne$ wins $\optGG[\ts, F, \tsRun, C]$, so there must be a strategy $\ggStratOne$ for Player~$\ggPOne$. Because $\tsRun$ is simple lasso-shaped, no state controlled by Player~$\ggPTwo$ has more than one outgoing transition and therefore, $\ggStratOne$ induces a single run $\winPlay = u v^\omega$ for some $u, v \in \tsStates^\ast$.
        
        To show that every state in $C$ is part of the loop $v$, we first show that every state in $C$ is part of $\winPlay$, then that at least one state in $C$ is part of the loop $v$ and finally that every state must be.
        
        Every state in $C$ must occur in $\winPlay$: If a state $s \in C$ did not occur in $\winPlay$, then $C \setminus \{s\}$ would be sufficient to enforce $\winPlay$ and $C$ would not be minimal. (Note that this only works because all states of Player~$\ggPTwo$ have only one outgoing transition. Otherwise, additional states outside of $\winPlay$ might be required in $C$ to ensure that every possible action of Player~$\ggPTwo$ can be countered.)
        
        Now assume that no state in $C$ is part of $v$. In that case, no state of $v$ is in $\tsRun$ -- if such a state $s_\tsRun$ existed, then once $s_\tsRun$ were reached, execution would continue along $\tsRun$ and therefore, $\winPlay$ would not be winning for Player~$\ggPOne$. Therefore, $v$ does not intersect $\tsRun$. Let $s^\ast \in C$ be a state such that $t \stateLeq s^\ast$ for all $t \in C$. Then $\{ s^\ast \}$ is winning on its own, as execution can first follow along $\tsRun$ until $s^\ast$ is reached and then follow the same strategy that can enforce $\winPlay$. This contradicts the assumption that $|C| > 1$.
        
        \newcommand{\vpre}{v_\mathit{pre}}
        \newcommand{\vpost}{v_\mathit{post}}
        Finally, assume that there is some state $s^\ast \in C$ that does not occur in $v$.  Let $s_v \in C$ be a state that is part of $v$, let $v = \vpre s \vpost$ and $s_v = \tsRun_k$. Then $\tsRun_0 \cdots \tsRun_k \vpost (v^\omega)$ is a winning run that Player~$\ggPOne$ can enforce by controlling $C \setminus \{s^\ast\}$. This contradicts the assumption that $|C| > 1$.
        
        Therefore, every state in $C$ occurs in $v$.
        
        \newcommand{\vinf}{v_\mathit{inf}}
        We now show \ref{lem:buechi_opt_properties_sbot} by first showing that an $\sbot$ exists and then showing that only one $\sbot$ can exist. Due to \ref{lem:buechi_opt_properties_loop}, all states in $C$ are contained in the loop $v$ of $\winPlay$. As $\winPlay$ is winning, $v$ contains at least one Büchi state $s_F \in F$. Let $\vinf = t_0 \cdots t_k \in \tsStates^\ast$ be the unique infix of $v^\omega$ such that $\vinf$ contains the Büchi state $s_F$, $t_0, t_k \in C$ and $t_i \notin C$ for all $0 < i < k$.
        We set $\sbot = t_0$ and $\sTop = t_k$. They fulfil the condition $\lowestReachF[F](\sbot) \stateLeq \sTop$:
        
        By following $\vinf$, the Büchi state $s_F$ is reached from $\sbot$. Continuing along $\vinf$ reaches some state $\tsRun^\ast$ on $\tsRun$ and continuing further reaches $\sTop$. This does not require controlling states apart from $\sbot$, as $t_i \notin C$ for $0<i<t$. Therefore, Player~$\ggPOne$ can enforce reaching $\tsRun^\ast$ through $s_F$ from $\sbot$ in $\optGG[\ts, F, \tsRun, \{ \sbot\}]$. This implies that $\lowestReachF[F](\sbot) = \tsRun^\ast$ and $\tsRun^\ast \stateLeq \sTop$ because $\sTop$ is reached from $\tsRun^\ast$ by following $\vinf$.
        
        Now assume that there are two $\sbot \neq \sbot'$ that reach $\sTop$ and $\sTop'$, respectively, through $F$. Without loss of generality, let $\sTop \stateGeq \sTop'$. We first consider the case where $\sbot \stateGeq \sbot'$. Because $v$ forms a loop, Player~$\ggPOne$ can force to reach $\sbot'$ from $\sTop$ by following $v$ and this only requires control of $C \setminus \{\sbot'\}$. Due to $\sbot' \stateLeq \sbot$, Player~$\ggPOne$ can force to reach $\sbot$ from $\sbot'$ without controlling any states and it can reach $\sTop$ from $\sbot$ through $\lowestReach(\sbot)$. This forms a winning loop that Player~$\ggPOne$ can enforce by controlling $C \setminus \{\sbot'\}$, which leads to a contradiction. The case for $\sbot' \stateGeq \sbot$ is symmetrical.
        
        We now show \ref{lem:buechi_opt_properties_sbot_lowest} using a similar argument. 
        Assume by contradiction that there exists some state $s \in C \setminus \{ \sbot \}$ with $\sbot \not\stateLt s$. As $s$ and $\sbot$ are on $\tsRun$, it follows that $s \stateLeq \sbot$. Then Player~$\ggPOne$ can enforce a winning run by following $\winPlay$ until $s$ is reached and then always following $\tsRun$ from $s$ to $\sbot$ and following $v$ otherwise. For this, control of $C \setminus \{s\}$ is sufficient, contradicting the assumption that $C$ is minimal.
        
        Now assume by contradiction that a state $s \in C \setminus \{\sTop\}$ exists with $s \not\stateLt \sTop$. Then $\sTop \stateLeq s$. Then Player~$\ggPOne$ can enforce a winning run by following $\winPlay$ until $\sTop$ is reached and then always following $\tsRun$ from $\sTop$ to $s$ and following $v$ otherwise. For this, control of $C \setminus \{\sTop\}$ is sufficient, contradicting the assumption that $C$ is minimal.
        
        \newcommand{\sprev}{s_\mathit{prev}}
        \newcommand{\sprevlowest}{s^\downarrow_\mathit{prev}}
        \newcommand{\snext}{s_\mathit{next}}
        Finally, we show \ref{lem:buechi_opt_properties_skip}. Let $s \in C \setminus \{\sbot\}$, let $\sprev \in C$ be the first state of $C$ that precedes $s$ in $\winPlay$ and let $\snext \in C$ be the first state of $C$ that follows $s$ in $\winPlay$. If $\sprev = \sbot$, let $\sprevlowest = \lowestReachF(\sprev)$, otherwise, let $\sprevlowest = \lowestReach(\sprev)$. Then $\sprevlowest \stateLeq s$ and $\lowestReach(s) \stateLeq \snext$.
        
        We show that $\sprevlowest$ is a ``skip'' state for $s$:
        \begin{itemize}
            \item $\sprevlowest \stateLeq \lowestReachF(\sbot)$: For $\sprevlowest = \lowestReachF(\sbot)$, this holds trivially. Therefore, let $\sprevlowest \neq \lowestReachF(\sbot)$. Assume by contradiction that $\sprevlowest \stateLeq \lowestReachF(\sbot)$ does not hold and therefore $\lowestReachF(\sbot) \stateLeq \sprevlowest$. Then Player~$\ggPOne$ can form a winning loop by starting from $\sbot$, then going to $\lowestReachF(\sbot)$, continuing along $\tsRun$ until $s$ (because $\lowestReachF(\sbot) \stateLeq \sprevlowest \stateLeq s$), then following $\winPlay$ again until $\sbot$ is reached again. This loop is reachable by following $\tsRun$ and therefore forms a winning run that does not require control of $\sprev$. This contradicts the assumption that $C$ is minimal.
            \item $\sbot \stateLt \sprevlowest$: Assume by contradiction that this does not hold and therefore $\sprevlowest \stateLeq \sbot$. Then Player~$\ggPOne$ can form a winning loop by starting in $\sprev$, then going to $\sprevlowest$, continuing along $\tsRun$ until $\sbot$ and then continuing along $\winPlay$ until $\sprev$ is reached again. This loop is reachable by following $\tsRun$ and therefore forms a winning run that does not require control of $s$, contradicting the assumption that $C$ is minimal.
            \item $\sprevlowest \stateLeq s$: Because $\sprev$ is the first state from $C$ that precedes $s$ in $\winPlay$, it is possible to reach $s$ from $\sprev$ without controlling additional states. Therefore, the lowest state reachable from $\sprev$ is either $s$ or a state that is even lower.
            \item $\lowestReach(s) \stateLt \sprevlowest$: Assume by contradiction that this does not hold and therefore $\sprevlowest \stateLeq \lowestReach(s)$. Then Player~$\ggPOne$ can form a winning loop by following $\tsRun$ until$\sprev$, continuing down to $\sprevlowest$, then following $\tsRun$ to $\snext$ (because $\sprevlowest \stateLeq \lowestReach(s) \stateLeq \snext$) and then continuing along $\winPlay$ again until $\sprev$. This loop is reachable by following $\tsRun$ and therefore forms a winning run that does not require control of $s$, contradicting the assumption that $C$ is minimal.
            \item There is no $s' \in C \setminus\{\sbot\}$ with $\lowestReach(s') \stateLt \sprevlowest \stateLeq s'$: Assume by contradiction that such an $s'\in C$ exists. Note that $s' \neq \sprev$, because $\lowestReach(s') \stateLt \lowestReach(\sprev)$. We first consider the case where $\lowestReach(s) \stateLeq \lowestReach(s')$. In that case, a winning loop for Player~$\ggPOne$ is formed as follows: From $\sprev$, then continuing down to $\sprevlowest$, following $\tsRun$ to $s$ and continuing down to $\lowestReach(s)$. Let $\snext' \in C$ be the first state in $C$ that follows $s'$ in $\winPlay$. Then $\lowestReach(s) \stateLeq \lowestReach(s') \stateLeq \snext'$ and therefore, the run is continued by following $\winPlay$ from $\snext'$. Enforcing this loop does not require control of $s'$ and it is reachable by following $\tsRun$. This contradicts the assumption that $C$ is minimal.
            
            The case of $\lowestReach(s') \stateLeq \lowestReach(s)$ is analogous: In that case, $s$ is not necessary for forcing a win.
        \end{itemize}
    \end{proof}
    \begin{algorithm}[t]
        \caption{Deciding the optimistic positivity problem for Büchi objectives.
        }
        \label{alg:opt_pos_buechi}
        \begin{algorithmic}[1]
            \Function{IsResponsible}{$s$}
            \If{$s \notin \tsRun$}
            \State\Return{false}
            \EndIf
            \If{\Call{IsWinning}{$\{s\}$}}
            \State\Return{true}
            \EndIf
            \For{$\sTop \stateGeq \lowestReachF(s)$}
            \If{\Call{IsSbottom}{$s$, $\sTop$}}
            \State\Return{true}
            \EndIf
            \EndFor
            \For{$\sbot \stateLt s$ with $\lnot \Call{IsWinning}{\{\sbot\}}$}
            \For{$\sTop \stateGeq s$ with $\sTop \stateGeq \lowestReachF(\sbot)$}
            \For{$\sskip \in \tsRun$ with $\lowestReach(s) \stateLt \sskip \stateLeq s$ and $\sbot \stateLt \sskip \stateLeq \lowestReachF(\sbot)$}
            \If{\Call{Is$\mathrm{S}_i$}{$s$, $\sbot$, $\sTop$, $\sskip$}}
            \State\Return{true}
            \EndIf
            \EndFor
            \EndFor
            \EndFor
            \State\Return{false}
            \EndFunction
            \Statex
            \Function{IsSbottom}{$s$, $\sTop$}
            \Let{$C$}{$\{s, \sTop\}$}
            \For{$s' \in \tsRun$ with $s \stateLt s' \stateLeq \sTop$ and $\lnot \Call{IsWinning}{\{s\}}$}
            \If{$\lnot (\lowestReachF(s') \stateLeq \sTop)$}
            \Let{$C$}{$C \cup \{ s'\}$}
            \EndIf
            \EndFor
            \State\Return{\Call{IsWinning}{$C$}}
            \EndFunction
            \Statex
            \Function{IsSi}{$s$, $\sbot$, $\sTop$, $\sskip$}
            \Let{C}{$\{s, \sbot\}$}
            \For{$s' \in \tsRun$ with $\sbot \stateLt s' \stateLeq \sTop$ and $\lnot \Call{IsWinning}{\{s'\}}$}
            \If{$\lnot (\lowestReachF(s') \stateLeq \sTop)$ and $\lnot(\lowestReach(s') \stateLt \sskip \stateLeq s')$}
            \Let{$C$}{$C \cup \{s'\}$}
            \EndIf
            \EndFor
            \State\Return{\Call{IsWinning}{C}}
            \EndFunction
        \end{algorithmic}
    \end{algorithm}
    
    This lemma forms the base of Algorithm~\ref{alg:opt_pos_buechi} to decide the optimistic positivity problem für Büchi objectives in polynomial time. The algorithm first covers the cases that $s \notin \tsRun$ or that $s$ is winning on its own. The remainder of the algorithm checks whether $s$ is part of a non-singleton minimal winning coalition, handling $s = \sbot$ and $s \in \{\sTop, s_1, \ldots, s_n\}$ separately.
    
    In the former case (lines 6--8 and $\Call{IsSbottom}{}$), the algorithm iterates over every potential $\sTop$ and then checks whether there is a winning coalition $C$ with the given $\sbot$ and $\sTop$. For this, every state $s'$ between $\sbot$ and $\sTop$ is added to $C$ if it is neither winning on its own nor can reach $\sTop$ through $F$. If $\sTop$ were reachable through $F$, then $s'$ could form a winning loop by going through $F$ to $\sTop$ without ever visiting $\sbot$. Finally, it is checked whether $C$ is winning. If yes, then $(C \setminus \{s\}, s)$ is a switching pair as only $s$ can reach $\sTop$ through $F$. If not, then no such coalition can exist, as $C$ contains all states that are not winning already without $s$.
    
    The latter case (lines 9--13 and $\Call{IsSi}{}$) first involves iterating over all potential $\sbot$, $\sTop$ and skip states for $s$. The function $\Call{IsSi}{}$ then checks whether a coalition $C$ of states between $\sbot$ and $\sTop$ exists such that no $s' \in C \setminus \{ s\}$ is able to ``jump'' over $\sskip$ (line 24). Once again, states that are winning on their own or that can reach $\sTop$ through $F$ are excluded (line 23 and 24).
    If none of these checks succeed, then $s$ has no responsibility.
    Correctness of the algorithm is formally shown in the full version (\cite{ArxivVersion}, Proposition~
    B.1).
    
    \begin{propositionrep}
        \label{prop:opt_pos_buechi_correctness}
        Algorithm~\ref{alg:opt_pos_buechi} solves the optimistic positivity problem für Büchi objectives in polynomial time.
    \end{propositionrep}
    
    
    \begin{proof}
        We first show that Algorithm~\ref{alg:opt_pos_buechi} returns true if and only if $s$ has positive responsibility. If $s$ has positive responsibility, then there must exist a minimal winning coalition $\Cmin$ containing $s$. If $|\Cmin| = 1$, then $\Call{IsWinning}{s} = \mathrm{true}$ and therefore, the check in line~4 is satisfied, returns true. Otherwise, $|\Cmin| > 1$ and therefore, Lemma~\ref{lem:buechi_opt_properties} applies.
        
        If $s$ is $\sbot$ as in the lemma, then the loop in line~6 eventually finds the right $\sTop$, for which $\lowestReach(s) \stateLeq \sTop$ must hold due to Lemma~\ref{lem:buechi_opt_properties}~\ref{lem:buechi_opt_properties_sbot}. Due to Lemma~\ref{lem:buechi_opt_properties}~\ref{lem:buechi_opt_properties_sbot}, every state from $\Cmin$ fulfils the checks in line~18 and is therefore added to $C$. Therefore, $C$ contains the minimal winning coalition $\Cmin$ and is therefore itself is a winning coalition. The check in line~20 is satisfied and the algorithm returns \emph{true}.
        
        Similarly, if $s$ is some $s_i$, then the loops in lines 9 to 11 eventually find the right $\sbot$, $\sTop$ and $\sskip$. All states from $\Cmin$ fulfil the conditions in line~24 and therefore, $C$ is also a winning coalition. Thus, the algorithm returns \emph{true}.
        
        We now show that if Algorithm~\ref{alg:opt_pos_buechi} returns true, there exists a switching pair with $s$. If it returns \emph{true} in line~4, then a switching pair is trivially given by $(\varnothing, s)$. If \emph{true} is returned in line~8, a switching pair is given by $(C \setminus \{s\}, s)$: Due to the check in line~20, $C$ is known to be winning and thus, it is sufficient to show that $(C \setminus \{s\})$ is losing. For the coalition to win, due to Lemma~\ref{lem:buechi_opt_properties}~\ref{lem:buechi_opt_properties_sbot}, there must exist states $\sbot' \neq s$ and $\sTop'$ with $\lowestReachF(\sbot') \stateLeq \sTop'$. Due to line~17, $s' \stateLeq \sTop$ for every $s' \in C$ and thus $\sTop' \stateLeq \sTop$. Therefore, the check in line~18 fails for any $\sbot'$ with $\lowestReachF(\sbot') \stateLeq \sTop$, so such an $\sbot'$ cannot be contained in $C$. Therefore, $(C \setminus \{s\}, s)$ is a switching pair.
        
        Finally, if \emph{true} is returned in line~13, a switching pair is given by $(C \setminus \{s\}, s)$. Once again, it suffices to show that $C \setminus \{s\}$ is losing by showing that no minimal winning coalition $\Cmin \subseteq C$ exists. Assume by contradiction that such a $\Cmin$ existed. Then $|\Cmin| > 1$, as the checks in line~9 and line~23 ensure that no singleton winning coalition is included in $C \setminus \{s\}$. By Lemma~\ref{lem:buechi_opt_properties}~\ref{lem:buechi_opt_properties_sbot}, there are $\sbot$ and $\sTop$ in $C$. To form a loop containing a Büchi state, Player~$\ggPOne$ must first go from $\sbot$ to $\lowestReach(\sbot)$ via $F$. However, once $\lowestReach(\sbot)$ is reached, it is not possible to return to $\sbot$ by controlling $C \setminus\{s\}$: In line~11, $\sskip$ is chosen such that $\sbot \stateLt \sskip \stateLeq \lowestReachF(\sbot)$. Due to line~24, no state $s_\mathit{low}$ with $s_\mathit{low} \stateLt  \sskip$ can be reached from any state $s_\mathit{high} \in C \setminus \{s\}$ with $\sskip \stateLeq  s_\mathit{high}$. Therefore, $\sbot$ cannot be reached from $\lowestReach(\sbot)$ and $C \setminus \{s\}$ is therefore not sufficient for winning.
        
        We now show that Algorithm~\ref{alg:opt_pos_buechi} takes polynomial time: The algorithm contains nested loops that iterate over a series of states. Each iteration of the innermost loop only requires polynomial time and the loops iterate over polynomially many states. Therefore, the total runtime is polynomial in the number of states of the input transition system.
    \end{proof}
}

\subsubsection*{Optimistic responsibility for parity objectives.}

\begin{figure}[t]
    \centering
    \begin{tikzpicture}[shorten >=1pt,node distance=1.5cm,on grid,auto, state/.style={circle,inner sep=1pt}] 
        \node[state,initial,initial text=] (s0)   {$s_0\colon\!1$};
        \node[state] (s1) [right=of s0] {$s_1\colon\!1$};
        \node[state] (s2) [right=of s1] {$s_2\colon\!3$};
        \node[state] (s3) [right=of s2] {$s_3\colon\!1$};
        \node[state] (s4) [right=of s3] {$s_4\colon\!1$};
        \node[state] (s5) [below=0.9cm of s2] {$s_5\colon\!2$};
        
        \path (s0) edge[counterexample] node  {} (s1);
        \path (s1) edge[counterexample] node  {} (s2);
        \path (s2) edge[counterexample] node  {} (s3);
        \path (s3) edge[counterexample] node  {} (s4);
        \path (s4) edge[loop right,counterexample] node  {} (s4);
        
        \path[->] (s0) edge node  {} (s1);
        \path[->] (s1) edge node  {} (s2);
        \path[->] (s2) edge node  {} (s3);
        \path[->] (s3) edge node  {} (s4);
        \path[->] (s4) edge[loop right] node  {} (s4);
        \path[->] (s0) edge[bend right=15] node  {} (s5);
        \path[->] (s5) edge[bend right=15] node  {} (s4);
        
        \path[->] (s4) edge[bend right=28] node  {} (s1);
        \path[->] (s1) edge[bend right=30] node  {} (s3);
        \path[->] (s3) edge[bend right=28] node  {} (s0);
    \end{tikzpicture}
    \caption{The winning run in this transition system with parity objective requires jumping ``forward'' from $s_1$ to $s_3$, which is never required for Büchi objectives.
    }
\label{fig:parity_jump_forward}
\end{figure}

The algorithm for the optimistic positivity problem for Büchi objectives relied on the insight that there is always a lasso-shaped winning run $\tsRun = u v^\omega$ (for $u \in \tsStates^\ast$, $v \in \tsStates^+$), if there is a winning run at all. The loop $v$ of this run consists of a single ``jump forward'' (i.e. going from some state $s \in \tsRun$ via some Büchi state $f \in F$ to $s' \in \tsRun$ with $s \stateLt s'$ without following $\tsRun$). The remainder of the loop consists of ``backward jumps'' from $t \in \tsRun$ to $t' \in \tsRun$ with $t \stateGt t'$ and of segments of $\tsRun$, i.e. the loop $v$ contains only a single jump forward.

Figure~\ref{fig:parity_jump_forward} shows that a single jump forward is not sufficient for parity objectives. States $s_0$, $s_1$, $s_3$ and $s_4$ are winning by following the run $\tsRun = (s_0 s_5 s_4 s_1 s_3)^\omega$. This requires both a jump from $s_0$ via $s_5$ to $s_4$ and from $s_1$ to $s_3$. Following $\tsRun$ from $s_1$ to $s_3$ (via $s_2$) is not possible because $s_2$ with colour $3$ must not be visited. In the Büchi case, jumping ``forward'' is never necessary, as it is never necessary to avoid a state.

This observation shows that the algorithm for Büchi objectives does not work for parity objectives. The precise complexity of the optimistic positivity problem for partiy objectives is left open, but we provide an upper bound. The same upper bound also applies to pessimistic positivity problem for reachability, Büchi and parity objectives, where we additionally show completeness.

\begin{proposition}
    \label{prop:compl_parity_opt_pos}
    \label{prop:compl_reach_pes_pos}
    \label{prop:compl_buechi_pes_pos}
    \label{prop:compl_parity_pes_pos}
    The optimistic positivity problem for parity objectives is in NP. The pessimistic positivity problem for reachability, Büchi and parity objectives is NP-complete.
\end{proposition}

\begin{appendixproof}
	Inclusion in NP is shown by first guessing a coalition $C$ and then verifying that $(C, s)$ is a switching pair. For parity objectives, this is possible because solving parity games is known to be in $\text{NP} \cap \text{CoNP}$. NP-hardness is shown by a reduction from the forward case, which is known to be NP-hard \cite{ImportanceValuesTL}.
	
    We first show inclusion in NP. Let $\ts = \tsStates$ be a transition system with reachability, Büchi or parity objective $\tsProp$ and simple lasso-shaped run $\tsRun$ that violates $\tsProp$. Let $s \in \tsStates$. A non-deterministic Turing machine determines whether $s$ has positive responsibility in polynomial time as follows.
    
    First, the Turing machine non-deterministically guesses a set of states $C \subseteq \tsStates$. It next constructs the game $\pesGG[\ts, \tsProp, \tsRun, C]$ (or $\optGG[\ts, \tsProp, \tsRun, C]$ for the optimistic case for parity objectives) and non-deterministically guesses a strategy for Player~$\ggPTwo$. It then verifies that this strategy is winning for Player~$\ggPTwo$. Even though no polynomial algorithm is known for solving parity games, this step only requires polynomial time, as fixing the strategy of Player~$\ggPTwo$ turns the game into a transition system. Determining whether a transition system fulfils a parity objective takes polynomial time.
    
    Finally, the Turing machine constructs the parity game $\optGG[\ts, c, \tsRun, C \cup \{s\}]$, non-deterministically guesses a strategy for Player~$\ggPOne$ and verifies that this is a winning strategy for Player~$\ggPOne$. If this is the case, it accepts. The algorithm for pessimistic responsibility functions in the same way, except that it constructs $\pesGG$ instead of $\optGG$.
    
    This machine accepts if and only if $(C, s)$ forms a switching pair, as the algorithm directly checks that $C$ is no sufficient for winning and that $C \cup \{s\}$ is sufficient for winning. Therefore, it accepts if and only if $s$ has positive responsibility.
    
    
    \newcommand{\snotf}{s_{\lnot F}}    
    We now show NP hardness for the pessimistic case. For reachability objectives, we give a reduction from the forward case: In \cite{ImportanceValuesTL}, Proposition~IV.5, it is shown that the forward positivity problem is NP-hard for reachability objectives. Let transition system $\ts = \tsStates$, state $s \in \tsStates$ and reachability objective $F$ be the input of a forward positivity problem. We construct a new transition system $\ts' = (\tsStates \cup \{\tsInit', \snotf\}, \tsTrans', \tsInit')$ with $\tsTrans' = \tsTrans' \cup \{ (\tsInit', \tsInit), (\tsInit', \snotf), (\snotf, \snotf) \}$ and set $\tsRun = \tsInit' \snotf^\omega$. Then $(\ts', F, \tsRun, s)$ is an input for the pessimistic backward responsibility problem.
    
    We show that $s$ has positive pessimistic backward responsibility in $\ts', F, \tsRun$ if and only if it has positive forward responsibility in $\ts, F$. First, assume that $s$ has positive forward responsibility in $\ts, F$. Then there exists some $C \subseteq \tsStates$ such that $(C, s)$ is a switching pair. Therefore, Player~$\ggPOne$ has a winning strategy $\sigma$ when controlling $C \cup \{s\}$. Due to this, Player~$\ggPOne$ also has a winning strategy in $\pesGG[\ts', \tsRun, C \cup \{s\} \cup \{\tsInit'\}]$ by first going from $\tsInit'$ to $\tsInit$ and then following $\sigma$. This is possible because the engraving only removes transitions from $\snotf$. Therefore, $(C \cup \{\tsInit'\}, s)$ is a switching pair for $\ts', F, \tsRun$ and therefore, $s$ has positive pessimistic backward responsibility.
    
    Now assume that $s$ has positive pessimistic backward responsibility in $\ts, F, \tsRun$. Then there exists some $C \subseteq \tsStates$ such that $(C, s)$ is a switching pair. Therefore, Player~$\ggPOne$ has a winning strategy $\sigma$ in $\pesGG[\ts', \tsRun, C \cup \{s\}]$. Furthermore, $\tsInit' \in C$, as otherwise, the engraving would remove all transitions from $\tsInit'$ except for $\tsInit' \tsTransRel' \snotf$, therefore making $F$ unreachable. From $\tsInit'$, Player~$\ggPOne$ must first go to $\tsInit$, as this is the only way to reach $F$. Due to this, Player~$\ggPOne$ can follow the same strategy restricted to $\tsStates$ to win in the forward setting in $\ts, F$. Therefore, $(C \cap \tsStates, s)$ is a switching pair in the forward setting and therefore, $s$ has positive forward responsibility.
    
    NP-hardness for the pessimistic positivity problem for Büchi and parity objectives follows from the NP-hardness of the pessimistic positivity problem for reachability objectives.
\end{appendixproof}

\subsection{Computing the responsibility value}

Optimistic responsibility values can be computed efficiently for safety properties \cite{BackwardRespTS}. This relies on the insight that all responsible states are winning on their own for safety objectives. The following extends this result to reachability objectives.

\begin{lemma}[Responsibility values for reachability objectives]
    \label{lem:opt_resp_safe_reach}
    Let $\ts$ and $\tsRun$ be as above and let $\tsProp$ be a reachability objective. Then there is some rational $r \in [0, 1]$ such that $\respOpt{\ts}{\tsProp}{\tsRun}(s) \in \{0, r\}$ for every $s \in \tsStates$.
\end{lemma}

\newcommand{\respS}{\mathit{RespStates}}
\begin{proof}
    Let $\respS \subseteq \tsStates$ be the set of states that are winning on their own, i.e. $s \in \respS$ if and only if $(\varnothing, s)$ is a switching pair. For $s \in \respS$, $C \subseteq \tsStates$, the pair $(C, s)$ is switching if and only if $C \cap \respS = \varnothing$: $C \cup \{s\}$ is sufficient for winning as $s$ is already winning on its own and due to Proposition~\ref{prop:compl_reach_opt_pos}, $C$ is not sufficient for winning if and only if it contains no states that are winning on its down.
    
    Therefore, for any $s \in \respS$ and size $i$, there are ${|S| - |\respS| \choose i}$ different $C$ with size $i$ such that $(C, s)$ is a switching pair. As any responsible state has the same number of switching pairs of the same size, it also has the same responsibility as any other responsible state.
\end{proof}

This lemma induces a polynomial algorithm for computing responsibility values for reachability objectives, because responsible states can be identified in polynomial time by Proposition~\ref{prop:compl_reach_opt_pos}.

\begin{proposition}
    \label{prop:compl_reach_opt_comp}
    \label{prop:compl_reach_opt_thresh}
    Solving the optimistic computation and threshold problem for reachability objectives requires polynomial time.
\end{proposition}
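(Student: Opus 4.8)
The plan is to combine \Cref{prop:compl_reach_opt_pos} and \Cref{lem:opt_resp_safe_reach} to reduce both problems to the evaluation of a single, structurally simple Shapley value. By \Cref{prop:compl_reach_opt_pos}, a state $s$ has positive responsibility if and only if $(\varnothing, s)$ is a switching pair, i.e.\ if and only if Player~$\ggPOne$ wins $\optGG[\ts, \eventually F, \tsRun, \{s\}]$; since this is a reachability game it can be decided in polynomial time, and doing so for every state yields, in polynomial time, the set $W \subseteq \tsStates$ of states that win on their own. \Cref{lem:opt_resp_safe_reach} tells us moreover that every state in $W$ receives the \emph{same} value $r$ while every other state receives $0$, so it remains only to compute the single number $r$.

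The key observation (already contained in the proof of \Cref{lem:opt_resp_safe_reach}) is that for $s \in W$ the switching pairs $(C, s)$ are exactly the sets $C \subseteq \tsStates \setminus W$: adding $s$ always wins because $s$ wins on its own, and $C$ loses precisely when it contains no state of $W$. Writing $n = |\tsStates|$ and $m = |W|$, the nonzero summands of the Shapley value can then be grouped by $|C| = i$, of which there are $\binom{n-m}{i}$, each carrying weight $\tfrac{i!\,(n-i-1)!}{n!}$. Hence
\[ r \;=\; \sum_{i=0}^{n-m} \binom{n-m}{i}\,\frac{i!\,(n-i-1)!}{n!}. \]
This is a sum of at most $n+1$ terms, each a rational whose numerator and denominator are products of factorials bounded by $n!$, so all of them can be formed and added exactly in time polynomial in $n$. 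Computing $r$ is therefore polynomial.

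With $r$ in hand, the computation problem is solved by returning $r$ for every state in $W$ and $0$ otherwise, and the threshold problem is solved by additionally performing one exact rational comparison of the returned value against the given $t$. Both steps are clearly polynomial, which establishes the claim.

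I expect the only genuine subtlety to lie in the \emph{bit-complexity} of the factorial arithmetic: one must argue that representing and summing these $O(n)$ rationals exactly stays polynomial, rather than merely that there are polynomially many terms. This is routine once all terms are placed over the common denominator $n!$, reducing the task to summing $O(n)$ integers each of magnitude at most $n!$, whose bit-length is $O(n \log n)$. Everything else follows directly from \Cref{prop:compl_reach_opt_pos} and \Cref{lem:opt_resp_safe_reach}, which do the combinatorial heavy lifting of characterising exactly which coalitions are switching.
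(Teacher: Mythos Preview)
Your argument is correct and proceeds along the same two pillars as the paper: \Cref{prop:compl_reach_opt_pos} to compute the set of responsible states in polynomial time, and \Cref{lem:opt_resp_safe_reach} to conclude that all of them share a common value $r$. The only difference is in how you obtain $r$. The paper invokes the \emph{efficiency} axiom of the Shapley value: since all responsibility must sum to $1$ (provided the objective is satisfiable at all) and exactly $|W|$ states carry equal positive weight, one gets $r = 1/|W|$ directly, with no summation or bit-complexity analysis needed. Your explicit evaluation of the Shapley sum is perfectly valid and indeed collapses to $1/|W|$, but it is the long way around; the efficiency property gives the closed form in one line and also makes the edge case $W = \varnothing$ (unsatisfiable objective, everyone gets $0$) explicit, which you handle only implicitly.
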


\begin{proof}
    Because the optimistic positivity problem for reachability objectives requires polynomial time (Proposition~\ref{prop:compl_reach_opt_pos}), the set of responsible states $\respS$ can be computed in polynomial time. By Lemma~\ref{lem:opt_resp_safe_reach}, every state in $\respS$ has the same responsibility value. Due to the \emph{efficiency property} of the Shapley value \cite{ShapleyHandbook}, the sum of responsibilities must be $1$, unless the objective is unsatisfiable in the transition system, in which case $R$ is empty and all states have responsibility value $0$. Therefore, the responsibility value of every state in $\respS$ is $\frac{1}{\left|R\right|}$ and the responsibility value of every other state is $0$.
    
    To decide the threshold problem, this value is then compared to the given threshold.
\end{proof}

{        
    \newcommand{\sia}[1]{s_{#1}^{a}}
    \newcommand{\sib}[1]{s_{#1}^{b}}
    \begin{figure}[t]
        \centering
        \pgfdeclarelayer{ce}
        \pgfsetlayers{ce,main}
        \begin{tikzpicture}[shorten >=1pt,node distance=1.0cm,on grid,auto, state/.style={rectangle,inner sep=1pt}] 
            \node[state,initial,initial text=] (s0) {$s_0$};
            \node[state] (s1a) [right=of s0] {$\sia{1}$};
            \node[state] (s1b) [right=of s1a] {$\sib{1}$};
            \node[state] (s2a) [right=of s1b] {$\sia{2}$};
            \node[state] (s2b) [right=of s2a] {$\sib{2}$};
            \node[state] (sna) [right=3cm of s2b] {$\sia{n}$};
            \node[state] (snb) [right=of sna] {$\sib{n}$};
            \node[state,fstate] (sf) [below=0.65cm of s2b] {$s_f$};
            
            \node[state] (s3a) [right=of s2b] {};
            \node[state] (s3b) [right=of s3a] {};
            \node[state] (snminus1b) [left=of sna] {};
            \node[state] (snminus1a) [left=of snminus1b] {};
            
            \begin{pgfonlayer}{ce}
                \path (s0) edge[counterexample] node  {} (s1a);
                \path (s1a) edge[counterexample] node  {} (s1b);
                \path (s1b) edge[counterexample] node  {} (s2a);
                \path (s2a) edge[counterexample] node  {} (s2b);
                \path (s2b) edge[counterexample] node  {} (sna);
                \path (sna) edge[counterexample] node  {} (snb);
                \path (snb) edge[counterexample,loop right] node  {} (snb);
            \end{pgfonlayer}
            
            \path[->] (s0) edge node  {} (s1a);
            \path[->] (s1a) edge node  {} (s1b);
            \path[->] (s1b) edge node  {} (s2a);
            \path[->] (s2a) edge node  {} (s2b);
            \path[->,dash pattern=on 7pt off 3pt on 3pt off 3pt on 3pt off 3pt on 3pt off 3pt on 3pt off 3pt on 3pt off 3pt on 3pt off 3pt on 3pt off 3pt on 3pt off 3pt on 3pt off 3pt on 8pt] (s2b) edge node  {} (sna);
            \path[->] (sna) edge node  {} (snb);
            \path[->] (snb) edge [loop right] node  {} (snb);
            
            \path[->] (s1a) edge[bend right=35] node  {} (s0);
            \path[->] (s1b) edge[bend right=35] node  {} (s0);
            \path[->] (s2a) edge[bend right=30] node  {} (s1a);
            \path[->] (s2b) edge[bend right=30] node  {} (s1a);
            \path[->,dashed] (s3a) edge[bend right=30] node  {} (s2a);
            \path[->,dashed] (s3b) edge[bend right=30] node  {} (s2a);
            \path[->] (sna) edge[bend right=30] node  {} (snminus1a);
            \path[->] (snb) edge[bend right=30] node  {} (snminus1a);
            
            \path[->] (s0) edge[bend right=7] node  {} (sf);
            \path[->] (sf) edge[bend right=10] node  {} (sna);
        \end{tikzpicture}
        \caption{Transition system with Büchi objective $\generally \eventually \{s_f\}$ that has exponentially many minimal winning coalitions.}
        \label{fig:buechi_exponential}
    \end{figure}
    
    While Algorithm~\ref{alg:opt_pos_buechi} efficiently checks whether \emph{a} switching pair exists for a Büchi objective, computing responsibility values requires finding \emph{all} switching pairs. This is computationally hard, as a system may have exponentially many minimal winning coalitions, as shown in Figure~\ref{fig:buechi_exponential}: Every $C \in \{\sia{1}, \sib{1}\} \times \cdots \times \{\sia{n}, \sib{n}\}$ is a minimal winning coalition.
    
}

\begin{proposition}
	\label{prop:compl_reach_pes_thresh}
    \label{prop:compl_buechi_opt_thresh}
    \label{prop:compl_buechi_pes_thresh}
    \label{prop:compl_parity_opt_thresh}
    \label{prop:compl_parity_pes_thresh}
    For reachability objectives, the pessimistic threshold problem is NP-hard and in \textsc{PSpace}. For Büchi and parity objectives, the optimistic and pessimistic threshold problems are in \textsc{PSpace} and the pessimistic threshold problem is NP-hard.
\end{proposition}

\begin{appendixproof}
	To decide the corresponding threshold problem for some state $s \in S$ and threshold $t \in [0, 1]$, a Turing machine iterates every subset of states $C \subseteq S$ and determines whether $(C, s)$ forms a switching pair. This takes polynomial space, as reachability, Büchi and parity games can be solved in polynomial space. As these checks can be performed consecutively and only polynomial space is required for keeping track of the responsibility value, computing the responsibility value is possible in polynomial space. The final value is then compared to the threshold~$t$.
    
    For the pessimistic threshold problem and for reachability, Büchi and parity objectives, NP-hardness follows from the NP-hardness of the pessimistic positivity problem for the corresponding objective class (\Cref{prop:compl_reach_pes_pos}).
\end{appendixproof}

\begin{propositionrep}
    \label{prop:compl_reach_pes_comp}
    \label{prop:compl_buechi_opt_comp}
    \label{prop:compl_buechi_pes_comp}
    \label{prop:compl_parity_opt_comp}
    \label{prop:compl_parity_pes_comp}
    For Büchi and parity objectives, the optimistic computation problem is in \sharpp{} and for reachability, Büchi and parity objectives, the pessimistic computation problem is \sharpp{}-complete.
\end{propositionrep}

\nocite{ParityGamesUniqueAcceptingRun}
\begin{appendixproof}
    \newcommand{\snotf}{s_{\lnot F}}
    To show inclusion in \sharpp{}, we construct a non-deterministic Turing machine whose accepting runs correspond to switching pairs. For this, the machine non-de\-ter\-mi\-nis\-ti\-cally chooses a set of states $C$. It then verifies that $(C, s)$ is a switching pair by solving two games, which takes polynomial time. In the case of parity games, no deterministic algorithm is known for solving them in polynomial time, so we rely on the fact that they are in $\text{NP}\cap\text{CoNP}$ to ensure that there is a one-to-one correspondence \cite{ParityGamesUniqueAcceptingRun}.
    
    We first show the NP-hardness of the pessimistic positivity problem for reachability objectives.
    In \cite{ImportanceValuesTL}, it is shown in Theorem IV.6 that the forward responsibility computation problem is \sharpp{}-hard by reduction from the problem of counting the number of solutions to a 1-in-3SAT instance. 
    
    Now consider the class of transition systems that have initial state $\tsInit' \in \tsStates$ with exactly two outgoing transitions to states $\tsInit \in \tsStates$ and $\snotf \in  \tsStates \setminus F$, such that $\snotf$ only has a self loop. The the forward responsibility computation problem restricted to this class is still \sharpp{}-complete, as one can simply add states $\tsInit'$ and $\snotf$ to the construction from \cite{ImportanceValuesTL} without changing the number of accepting runs -- in the modified transition system, every run starts from $\tsInit'$ to $\tsInit$ and then continues as in the original transition system.
    
    Finally, we provide a reduction from the forward responsibility computation problem restricted to the class described above to the backward pessimistic computation problem. Given an instance of the restricted forward responsibility computation problem $\ts, F, s$ with the special states $\tsInit'$ and $\snotf$, we set $\tsRun = \tsInit' \snotf^\omega$. This run violates $F$ as we assumed $\snotf \notin F$. Therefore, we output the backward responsibility computation problem $\ts, \tsRun, F, s$.
    
    We now show that for $C \subseteq \tsStates$, the pair $(C, s)$ is switching in the forward case if and only if it is switching in the backward case. Let $(C, s)$ be a switching pair in the forward case. Then $\tsInit' \in C$, as otherwise Player~$\ggPTwo$ controls $\tsInit'$ and can go to $\snotf$, from where $F$ is unreachable. Therefore, in the backwards case, the transition from $\tsInit'$ to $\tsInit$ is not removed during engraving and Player~$\ggPOne$ controls $\tsInit'$. Due to this, Player~$\ggPOne$ can first go to $\tsInit$ and then follow the strategy from the forward case, which eventually reaches $F$.
    
    The argument in the other direction is analogous: Let $(C, s)$ be a switching pair in the backward setting. Then $\tsInit' \in C$ as otherwise, the engraving would remove the transition from $\tsInit'$ to $\tsInit$, thus making $F$ unreachable. Player~$\ggPOne$ must go from $\tsInit'$ to $\tsInit$ first (because $F$ is unreachable from $\snotf$). The winning strategy it follows from $\tsInit$ is also winning in the forward case, as the games are identical: The engraving does not affect the states that are not on $\tsRun$.
    
    Therefore, the pessimistic backward computation problem for reachability objectives in \sharpp{}-hard.
    
    \paragraph*{}
    To show \sharpp{}-hardness of the pessimistic computation problem for Büchi objectives, we give a reduction from the pessimistic backward responsibility computation problem for reachability objectives. While it is quite easy to model a reachability property as a Büchi property, it is a bit more involved to ensure that the modified system has the same number of runs.
    
    Let $\ts = \tsStates$ be a transition system with reachability property $F$ and run $\tsRun$ violating $F$. We construct $\ts' = (\tsStates \cup \dot{\tsStates}, \tsTrans', \tsInit)$, where $\dot{\tsStates} = \{ \dot{s}\mid s \in \tsStates\}$ and where the transition relation is given by
    \begin{itemize}
        \item $s \tsTransRel' t$ for $s \in \tsStates \setminus F$, $t \in \tsStates$ with $s \tsTransRel t$,
        \item $s \tsTransRel' \dot{s}$ for $s \in F$ and
        \item $\dot{s} \tsTransRel' \dot{t}$ for $s, t \in \tsStates$ with $s \tsTransRel t$.
    \end{itemize}
    As Büchi objective, choose $F' = \dot{\tsStates}$.
    
    We construct a bijection $f \colon \tsRuns(\ts) \to \tsRuns(\ts')$ and show that a run $\tsRun \in \tsRuns(\ts)$ is accepting if and only if $f(\tsRun) \in \tsRuns(\ts')$ is accepting. Let $\tsRun = \tsRun_0 \tsRun_1 \in \tsRuns(\ts)$ be a run. If $\tsRun_i \notin F$ for all $i \in \mathbb N$, then set $f(\tsRun) = \tsRun$. In this case, neither $\tsRun$ nor $f(\tsRun)$ is accepting with regard to reachability objective $F$ and Büchi objective $F'$, respectively. Now consider the case that $\tsRun$ is accepting with regard to reachability objective $F$. Pick the smallest $i \in \mathbb N$ with $\tsRun_i \in F$ and set $f(\tsRun) = \tsRun_0 \cdots \tsRun_i \dot{\tsRun}_i \dot{\tsRun}_{i+1} \cdots$. Then $f(\tsRun)$ is also accepting because for every $j \geq i$, $f(\tsRun)_j \in F'$ and therefore, $F'$ is visited infinitely often.
    
    Due to this bijection that preserves acceptance, the number of accepted runs is the same in $\ts$ and $f(\ts)$. Therefore, the reduction target is also \sharpp{}-hard.
    
    For parity objectives, \sharpp{}-hardness follows from the \sharpp{}-hardness of the pessimistic computation problem for Büchi objectives. Similar to the proof of Proposition~\ref{prop:compl_parity_opt_thresh}, it is sufficient to colour states in $F$ with colour $2$ and those not in $F$ with colour $1$.
\end{appendixproof}
\section{Computing responsible states using refinement}
\label{sec:refinement}

Iterating all switching pairs to compute responsibility is infeasible in large models. However, many models only have a small set of responsible states, with most states bearing no responsibility. Existing implementations use some local criteria to identify states with no responsibility (e.g. states with only a single successor), but this local view does not capture the higher-level model structure. For example, the clouds in Figure~\ref{fig:refinement_sketch} represent a large number of states connected acyclically and have no responsibility for whether $s_+$ is eventually reached, even though the clouds might contain complex local structures.
Existing approaches for dealing with large models include stochastic sampling \cite{BackwardRespTS} and
grouping states into blocks \cite{BackwardRespTS,ActorBasedResponsibility}. The latter approach is efficient as long as the number of state blocks is limited, but requires manual specification of the state blocks.

\begin{figure}[t]
	\centering
	\begin{tikzpicture}[shorten >=1pt,node distance=1.5cm,on grid,auto, state/.style={circle,inner sep=1pt}] 
		\node[state,initial,initial text=] (s0)   {$s_0$};
		\node[cloud, draw,cloud puffs=9,cloud puff arc=160, aspect=1.5, inner ysep=0.35em, right=1.3cm of s0] (c1) {$c_1$};
		\node[state] (s1) [right=1.35cm of c1] {$s_\mathit{crit}$};
		\node[cloud, draw,cloud puffs=9,cloud puff arc=160, aspect=1.5, inner ysep=0.35em, above right=0.4cm and 1.5cm of s1] (c2) {$c_2$};
		\node[cloud, draw,cloud puffs=9,cloud puff arc=160, aspect=1.5, inner ysep=0.35em, below right=0.4cm and 1.5cm of s1] (c3) {$c_3$};
		\node[state] (s2) [right=1.3cm of c2] {$s_+$};
		\node[state] (s3) [right=1.3cm of c3] {$s_-$};
		
		\path[->] (s0) edge node {} (c1);
		\path[->] (c1) edge node {} (s1);
		\path[->] (s1) edge node {} (c2);
		\path[->] (s1) edge node {} (c3);
		\path[->] (s1) edge node {} (c3);
		\path[->] (c2) edge node {} (s2);
		\path[->] (c3) edge node {} (s3);
		\path[->] (s2) edge[loop right] node {} (s2);
		\path[->] (s3) edge[loop right] node {} (s3 );
	\end{tikzpicture}
	\caption{Sketch of a large transition system with reachability property $\lozenge s_+$.
		If the clouds are acyclic, only $s_\mathit{crit}$ has positive responsibility. The refinement algorithm from Section~\ref{sec:refinement} can efficiently identify $s_\mathit{crit}$ as responsible.}
\label{fig:refinement_sketch}
\end{figure}

This section presents an algorithm that automatically builds state blocks by iterative refinement.
To output individual responsibility, we give  a condition under which a block's responsibility corresponds to a state's individual responsibility. We also provide a heuristics that is optimal for safety and reachability objectives and illustrate the algorithm by an example.

Let $\ts = \tsTuple$ be a transition system with objective $\tsProp$ and run $\tsRun$ that violates $\tsProp$, let $\respName \in \{ \respOpt{\ts}{\tsProp}{\tsRun}, \respPes{\ts}{\tsProp}{\tsRun} \}$ be a responsibility metric and, for $C \subseteq \tsStates$, and let $\ggGame[C] \in \{\optGG[\ts, \tsProp, \tsRun, C], \pesGG[\ts, \tsProp, \tsRun, C] \}$ be the corresponding graph game.

\begin{definition}[Block responsibility]
	Let $\stateGroups$ be a partition of $\tsStates$. Every $B \in \stateGroups$ is called a \emph{state block}. For state $s \in \tsStates$, $\groupOf{s}$ denotes the unique block in $\stateGroups$ that contains $s$ and for $S' \subseteq \tsStates$, $\groupOf{S'} = \bigcup_{s \in S'} \groupOf{s}$.
	The \emph{block responsibility} $\respGroupedName$ is defined by the Shapley value of the function $2^\stateGroups \to \{0, 1\}$ that maps $\mathcal{C} \subseteq \stateGroups$ to $\ggValue(\ggGame[\flatten(\mathcal{C})])$, where $\flatten(\mathcal{C}) = \bigcup_{B \in C} B$. If $B \in \stateGroups$ is a block and $C$ is a union of blocks in $\stateGroups$, then the pair $(C, B)$ is called a \emph{block-switching pair} if Player~$\ggPOne$ wins $\ggGame[C \cup B]$ and Player~$\ggPTwo$ wins $\ggGame[C]$. The set of blocks $B$ for which some block-switching pair $(C, B)$ exists is denoted by $\hasSP$.
\end{definition}

\begin{remark}
    A block $B \in \stateGroups$ has positive block responsibility if and only if $B \in \hasSP$, but the responsibility value of a block is not equal to the sum of responsibility values of its members: Figure~\ref{fig:groups_no_relation} demonstrates that no obvious relation exists between block responsibility and (individual) responsibility. If states are partitioned into $\{s_0, s_1\}$ and $s_2$, then block $\{s_0, s_1\}$ has responsibility value $0$. On the other hand, if states are considered individually, then $s_0$ has responsibility value $1/2$ and $s_1$ and $s_2$ have responsibility value $1/4$ each.
    
    This shows that the responsibility value of a block is not the sum of responsibilities of its members and a block with no responsibility may contain members with positive responsibility. Additionally, if some block $B$ is refined, this affects the responsibility values of other blocks. In particular, a block $C$ that had no responsibility before may have a positive responsibility value after block $B$ is refined into $B'$ and $B \setminus B'$.
\end{remark}

\begin{figure}[t]
 \centering
 \begin{tikzpicture}[shorten >=1pt,node distance=1.2cm,on grid,auto, state/.style={circle,inner sep=1pt}] 
      \node[state,initial,initial text=] (s0)   {$s_0$};
      \node[state] (s1) [above right=0.5cm and 1.2 cm of s0] {$s_1$};
      \node[state] (s2) [below right=0.5cm and 1.2 cm of s0] {$s_2$};
      \node[state,fstate] (s3) [right=2.4cm of s0] {$s_3$};
      
      \path (s0) edge[loop above, counterexample] node  {} (s0);
      
      \path[->] (s0) edge node  {} (s1);
      \path[->] (s0) edge node  {} (s2);
      \path[->] (s1) edge node  {} (s3);
      \path[->] (s2) edge node  {} (s3);
      
      \path[->] (s0) edge [loop above] node  {} (s0);
      \path[->] (s1) edge [loop above] node  {} (s1);
      \path[->] (s2) edge [loop below] node  {} (s2);
      \path[->] (s3) edge [loop right] node  {} (s3);
  \end{tikzpicture}
 \caption{A transition system with reachability objective $\eventually \{s_3\}$ and run $\tsRun = s_0^\omega$ violating $F$, demonstrating how block responsibility compares to individual responsibility.}
 \label{fig:groups_no_relation}
\end{figure}

Under the following restriction, a (singleton) block's responsibility is equal to its member state. This is used by the refinement algorithm as a termination condition.

\begin{theorem}
	\label{thm:singleton_sp_bsp}
	If every block in $\hasSP$ is a singleton, then for $s \in \tsStates$,
	\[\respName(s) > 0 \quad \mathit{iff} \quad \respGroupedName(\groupOf{s}) > 0.\]
\end{theorem}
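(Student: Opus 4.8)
The plan is to express both sides through switching pairs and then to match individual minimal winning coalitions with minimal winning block-coalitions. Recall from the preliminaries that in a simple (hence monotone) cooperative game an agent has positive Shapley value exactly when it occurs in some switching pair; by monotonicity of $\ggValue(\ggGame[\cdot])$ this is the same as saying the agent lies in some minimal winning coalition (a winning coalition in which every member is critical). Call $C\subseteq\tsStates$ \emph{winning} if $\ggValue(\ggGame[C])=1$ and a block-coalition $\mathcal C\subseteq\stateGroups$ \emph{winning} if $\ggValue(\ggGame[\flatten(\mathcal C)])=1$; both functions are monotone. Applying the equivalence to the individual game $C\mapsto\ggValue(\ggGame[C])$ and to the block game $\mathcal C\mapsto\ggValue(\ggGame[\flatten(\mathcal C)])$, the theorem reduces to: $s$ lies in some minimal winning coalition iff $\groupOf s\in\hasSP$, i.e. iff $\groupOf s$ lies in some minimal winning block-coalition.

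For the direction $\respGroupedName(\groupOf s)>0\Rightarrow\respName(s)>0$, I would argue directly. From $\groupOf s\in\hasSP$ and the hypothesis that every block of $\hasSP$ is a singleton, $\groupOf s=\{s\}$, and there is a block-switching pair $(\mathcal C,\{s\})$ with $\ggValue(\ggGame[\flatten(\mathcal C)\cup\{s\}])=1$ and $\ggValue(\ggGame[\flatten(\mathcal C)])=0$. Since $\flatten(\mathcal C)\subseteq\tsStates$, this is already an individual switching pair $(\flatten(\mathcal C),s)$, so $\respName(s)>0$.

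The substantial direction is $\respName(s)>0\Rightarrow\respGroupedName(\groupOf s)>0$. First I would fix a minimal winning coalition $D\ni s$ (obtained from any switching pair of $s$ by deleting non-critical states, using monotonicity to keep $D\setminus\{s\}$ losing). The goal is to show that $\{\groupOf r\mid r\in D\}$ is a minimal winning block-coalition of singletons. Since $\flatten(\{\groupOf r\mid r\in D\})=\groupOf D\supseteq D$ is winning, choose a minimal winning block-sub-coalition $\mathcal E\subseteq\{\groupOf r\mid r\in D\}$; by hypothesis every block of $\mathcal E$ is a singleton, and since each such block is $\groupOf r$ for some $r\in D$ it equals $\{r\}$, so $\flatten(\mathcal E)\subseteq D$. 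Now for any $t\in D$: if $\groupOf t\notin\mathcal E$ then $t\notin\flatten(\mathcal E)$, hence $\flatten(\mathcal E)\subseteq D\setminus\{t\}$ would be winning, contradicting criticality of $t$ in $D$. Therefore $\groupOf t\in\mathcal E$ for every $t\in D$, so $\mathcal E=\{\groupOf r\mid r\in D\}$ consists of singletons; it is a minimal winning block-coalition and contains $\groupOf s=\{s\}$, witnessing $\groupOf s\in\hasSP$.

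I expect the main obstacle to be precisely this last direction. The naive idea of lifting the switching pair $(D\setminus\{s\},s)$ to blocks by replacing $D\setminus\{s\}$ with the union of its blocks fails, because adding states only helps the winning player, so a losing coalition can become winning after blocking up. The resolution exploits the hypothesis in a pointed way: any winning sub-coalition of $\{\groupOf r\mid r\in D\}$ must be assembled from singleton blocks, and a singleton block coming from $D$ is just a single state of $D$; this confines the witnessing coalition to $D\setminus\{t\}$ and revives the contradiction with criticality. The only points demanding care are checking that the extracted $D$ is truly minimal and that no extraneous state sneaks into $\flatten(\mathcal E)$, both of which follow from the partition property of $\stateGroups$ together with monotonicity.
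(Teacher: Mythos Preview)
Your argument is correct, and the key mechanism agrees with the paper's: the hypothesis forces every block appearing in a witnessing block-coalition to be a singleton, so flattening lands back inside the original coalition and minimality does the rest. The packaging differs slightly. The paper argues the forward direction by contradiction: starting from an arbitrary switching pair $(C,s)$ it passes to $\groupOf{C}$, then strips out all blocks not in $\hasSP$ (each removal preserves winning, else the removed block would lie in $\hasSP$), and observes that the remaining set is contained in $C$ by the singleton hypothesis, contradicting that $C$ was losing. You instead work constructively via minimal winning coalitions on both the individual and the block level, which makes the role of the hypothesis (every block of a minimal winning block-coalition is in $\hasSP$, hence singleton) more transparent and yields the full equality $\mathcal E=\{\groupOf r\mid r\in D\}$ as a byproduct. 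Both routes are short; yours is a bit more modular from the cooperative-game-theory viewpoint, while the paper's avoids introducing the auxiliary minimal $D$ and $\mathcal E$.
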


\begin{proof}
	By contradiction, let $s \in \tsStates$ such that $\respName(s) > 0$, but $\respGroupedName(\groupOf{s}) = 0$. As $\respName(s) > 0$, there is some $C \subseteq \tsStates$ such that $(C, s)$ is a switching pair. Conversely, $(\groupOf{C}, \groupOf{s})$ is not a block-switching pair because $\respGroupedName(s) = 0$. As $C \cup \{s\} \subseteq \groupOf{C} \cup \groupOf{s}$, Player~$\ggPOne$ wins $\ggGame[\groupOf{C} \cup \groupOf{s}]$. Thus, Player~$\ggPOne$ also wins $\ggGame[\groupOf{C}]$, as otherwise $(\groupOf{C}, \groupOf{s})$ would be a block-switching pair.
	Let $C' = \{s \in C \mid \groupOf{s} \in \hasSP\}$. As Player~$\ggPOne$ wins $\ggGame[\groupOf{C}]$, Player~$\ggPOne$ also wins $\ggGame[\groupOf{C'}]$ -- otherwise, one could construct a switching pair for some $\groupOf{s'}$ with $s \in C \setminus C'$, but the blocks of states in $C \setminus C'$ do not have switching pairs. As every block in $\hasSP$ is singleton, $\groupOf{C'} = C' \subseteq C$ and thus Player~$\ggPOne$ wins $\ggGame[C]$. This contradicts $(C, s)$ being a switching pair.
	Now let $\respGroupedName(\groupOf{s}) > 0$. Then there exists a block-switching pair $(\groupOf{C}, \groupOf{s})$. Then $\groupOf{s} = \{ s \}$ is singleton and thus $(\groupOf{C}, s)$ is also a switching pair, implying $\respName(s) > 0$.
\end{proof}

Theorem~\ref{thm:singleton_sp_bsp} forms the heart of the responsibility refinement algorithm (Algorithm~\ref{alg:refinement}). Heuristics $\Call{InitialPartition}{\relax}$ produces a coarse initial partition. Each iteration first determines whether any blocks in $\hasSP$ are not singleton and selects some of these for refinement using heuristics $\Call{SelectRefinements}{\cdot}$. Each chosen block is then refined into two or more blocks by the heuristics $\Call{Refine}{\cdot}$. Different heuristics are evaluated in Section~\ref{sec:experimental}.

\begin{algorithm}[t]
	\caption{Refinement algorithm for computing the set of responsible states.}
	\label{alg:refinement}
	\begin{algorithmic}[1]
		\Let{$\stateGroups$}{\Call{InitialPartition}{\relax}}
		\Let{$\hasSP$}{\Call{ComputeHasSP}{$\stateGroups$}}
		\While{$\exists B \in \hasSP \textit{ with } |B| > 1$}
		\Let{$R$}{\Call{SelectRefinements}{$\stateGroups$, $\hasSP$}}
		\For{$B \in R$}
		\Let{$\stateGroups$}{$(\stateGroups \setminus B) \cup \Call{Refine}{B}$}
		\EndFor
		\Let{$\hasSP$}{\Call{ComputeHasSP}{$\stateGroups$}}
		\EndWhile
		\State\Return $\{s \mid \groupOf{s} \in \hasSP\}$
	\end{algorithmic}
\end{algorithm}

\begin{theorem}[Correctness of refinement]
	If $\textsc{SelectRefinements}(\cdot)$ always selects at least one block and if $\textsc{Refine}(\cdot)$ always produces at least two non-empty blocks, then Algorithm~\ref{alg:refinement} terminates and a state $s$ is contained in the algorithm's output if and only if $\respName(s) > 0$.
\end{theorem}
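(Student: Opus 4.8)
The plan is to prove the two assertions—termination and correctness of the returned set—separately, reducing the correctness part entirely to \Cref{thm:singleton_sp_bsp}.

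For termination, I would use a monotone potential argument. Throughout the execution, $\stateGroups$ is always a partition of the finite set $\tsStates$ into non-empty blocks, so the number of blocks satisfies $|\stateGroups| \le |\tsStates|$. By hypothesis, in each pass of the while loop $\textsc{SelectRefinements}(\cdot)$ returns at least one block, and $\textsc{Refine}(\cdot)$ replaces each selected block by at least two non-empty blocks. Hence every iteration strictly increases $|\stateGroups|$ while preserving the property of being a partition into non-empty blocks. Since this quantity is bounded above by $|\tsStates|$, the loop can execute at most $|\tsStates|$ times and therefore halts.

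For the correctness of the output, I would read off the loop's exit condition. When the loop terminates, its guard is false, so no block $B \in \hasSP$ has $|B| > 1$; equivalently, every block in $\hasSP$ is a singleton. This is exactly the hypothesis of \Cref{thm:singleton_sp_bsp}, which then yields $\respName(s) > 0$ iff $\respGroupedName(\groupOf{s}) > 0$ for every $s \in \tsStates$. Combining this with the earlier observation that a block has positive block responsibility iff it lies in $\hasSP$, I obtain $\respName(s) > 0$ iff $\groupOf{s} \in \hasSP$. Since the algorithm returns exactly $\{s \mid \groupOf{s} \in \hasSP\}$, a state $s$ lies in the output iff $\respName(s) > 0$, as required.

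I expect the only real subtlety to be bookkeeping rather than mathematics: I must ensure that the value of $\hasSP$ named in the final return is the one recomputed on line~8 for the terminal partition, so that the partition for which the guard fails is the same partition whose $\hasSP$ is returned. Because $\hasSP$ is refreshed immediately before the guard is re-tested, no stale value is used and the two coincide. All the genuine mathematical content resides in \Cref{thm:singleton_sp_bsp}; the remaining work is just the termination bound and lining up the loop's exit condition with that theorem's hypothesis.
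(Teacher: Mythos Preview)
Your proposal is correct and follows essentially the same approach as the paper: a counting argument on the number of blocks for termination, and invoking \Cref{thm:singleton_sp_bsp} via the loop's exit condition for correctness. Your treatment is in fact slightly more explicit about the termination bound and the bookkeeping regarding which $\hasSP$ is returned, but the mathematical content is identical.
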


\begin{proof}
	During every iteration of the loop (lines~3--7), the number of (non-empty) blocks in $\stateGroups$ grows, because at least one block is refined and this refinement splits the block into at least two new blocks. Therefore, after finitely many iterations, the partition only contains singletons, unless the algorithm has already terminated. Once the partition only contains singletons, the loop condition (line~3) is no longer satisfied and the algorithm therefore terminates.
	
	Line~8 is reached when all blocks in $\hasSP$ are singletons, so Theorem~\ref{thm:singleton_sp_bsp} is applicable. A block $\groupOf{s}$ has positive group responsibility if and only if a block-switching pair for $\groupOf{s}$ exists and by Theorem~\ref{thm:singleton_sp_bsp}, $s$ has positive (individual) responsibility exactly in this case. Therefore, Algorithm~\ref{alg:refinement} outputs exactly the states with positive individual responsibility.
\end{proof}

The performance of Algorithm~\ref{alg:refinement} depends on suitable heuristics. Section~\ref{sec:experimental} presents and evaluates several heuristics for $\Call{InitialPartition}{\cdot}$, $\Call{SelectRefinements}{\cdot}$, $\Call{Refine}{\cdot}$. In the following, we present an instance of the $\Call{Refine}{\cdot}$ heuristics called \emph{Frontier} and show that it is optimal for safety and reachability objectives.

\newcommand{\Frontier}{\mathit{Fr}}

\begin{definition}[Frontier refinement heuristics]
	\label{def:frontier}
    Let $(B, C)$ be a block-switching pair. Let $\Delta = \ggWin(\ggGame[C \cup B]) \setminus \ggWin(\ggGame[C])$ denote the states that are in Player $\ggPOne$'s winning region if he controls $C \cup B$ but not in his winning region if he only controls $C$.
    
    The \emph{frontier $\Frontier(B, C)$} contains the states $s \in \Delta \cap B$ such that $s$ has a transition to some state $t \notin \Delta$. The \emph{frontier refinement heuristics} selects a state $s \in \Frontier(B, C)$ and refines block $B$ into $B \setminus \{s\}$ and $\{s\}$. If $\Frontier = \varnothing$ exists, the heuristics selects an arbitrary state from $\Delta$.
\end{definition}

\begin{lemma}
    \label{lem:frontier_well_defined}
    Let $(B, C)$ be a block-switching pair and $\Delta = \ggWin(\ggGame[C \cup B]) \setminus \ggWin(\ggGame[C])$. For safety, reachability, Büchi and parity objectives, $\Delta \cap B \neq \varnothing$ and therefore, the frontier refinement heuristics can always select a state $s \in \Delta$.
\end{lemma}

\begin{proof}
    As $(B, C)$ is a block-switching pair, Player $\ggPOne$ wins $\ggGame[C \cup B]$ and thus the initial state $\tsInit \in \ggWin(\ggGame[C \cup B])$. Similarly, Player $\ggPOne$ loses $\ggGame[C]$ and thus $\tsInit \notin \ggWin(\ggGame[C])$. Therefore, $\ggWin(\ggGame[C \cup B]) \setminus \ggWin(\ggGame[C]) \neq \varnothing$.
    
    To show that $\Delta \cap B \neq \varnothing$, the proof uses the fact that, for safety, reachability Büchi and parity objectives and for any $\tsStates' \subseteq \tsStates$, Player $\ggPOne$ can win from the states in $\ggWin(\ggGame[\tsStates'])$ without leaving $\ggWin(\ggGame[\tsStates'])$ and Player $\ggPTwo$ can win from the states in $\tsStates \setminus \ggWin(\ggGame[\tsStates'])$ without leaving $\tsStates \setminus \ggWin(\ggGame[\tsStates'])$.
    
    Therefore, Player $\ggPOne$ does not need to control any of the states in $B$ that are outside of $\ggWin(\ggGame[C \cup B])$, so he wins $\ggGame[C \cup (B \cap \ggWin(\ggGame[C \cup B]))]$. By the same argument, once a play reaches $\ggWin(\ggGame[C])$, controlling the states in $C$ is sufficient for remaining in $\ggWin(\ggGame[C])$, so Player $\ggPOne$ also wins $\ggGame[C \cup (B \cap \ggWin(\ggGame[C \cup B]) \setminus \ggWin(\ggGame[C]))] = \ggGame[C \cup (B \cap \Delta)]$. If $B \cap \Delta = \varnothing$, then $\ggGame[C \cup (B \cap \Delta)] = \ggGame[C]$, which is a contradiction, as Player $\ggPOne$ wins the former and loses the latter game.
\end{proof}

\begin{figure}[t]
	\centering
	\begin{tikzpicture}[shorten >=1pt,node distance=1.1cm,on grid,auto, state/.style={circle,inner sep=1pt}] 
		\node[state,initial,initial text=] (s0)   {$s_0$};
		\node[state] (s1) [right=of s0] {$s_1$};
		\node[state] (s2) [right=of s1] {$s_2$};
		\node[state] (s3) [right=of s2] {$s_3$};
		\node[state] (s5) [right=of s3] {$s_5$};
		\node[state] (s4) [below=0.7cm of s5] {$s_4$};
		\node[state] (s6) [right=of s5] {$s_6$};
		\node[state] (s7) [right=of s4] {$s_7$};
		\node[state] (s8) [right=of s6] {$s_8$};
		\node[state] (s9) [right=of s8] {$s_9$};
		\node[state] (sbad) [right=of s9] {$\lightning$};
		
		\path[->] (s0) edge node  {} (s1);
		\path[->] (s0) edge[bend left=35] node  {} (s2);
		
		\path[->] (s1) edge[bend right=15] node  {} (s2);
		
		\path[->] (s2) edge[bend right=15] node  {} (s1);
		\path[->] (s2) edge node  {} (s3);
		
		\path[->] (s3) edge node  {} (s4);
		\path[->] (s3) edge node  {} (s5);
		\path[->] (s3) edge [bend left=35] node  {} (s6);
		
		\path[->] (s4) edge[bend right=15] node  {} (s7);

		\path[->] (s5) edge node  {} (s6);
		
		\path[->] (s6) edge node  {} (s8);
		\path[->] (s6) edge [bend left=35] node  {} (s9);
		
		\path[->] (s7) edge[bend right=15] node  {} (s4);
		
		\path[->] (s8) edge node  {} (s7);
		\path[->] (s8) edge node  {} (s9);
		
		\path[->] (s9) edge node  {} (sbad);
	\end{tikzpicture}
	\caption{A transition system with safety objective $\generally \lnot \lightning$. In \Cref{example:refinement}, the refinement algorithm is executed on this system.}
	\label{fig:refinement_example}
\end{figure}

\begin{example}
	\label{example:refinement}
	Consider the transition system in \Cref{fig:refinement_sketch} with safety objective $\generally \lnot \lightning$. In the following, the refinement algorithm is executed with the following heuristics: $\Call{InitialPartition}{\relax}$ constructs a singleton partition. $\Call{SelectRefinements}{\cdot}$ selects the largest non-singleton block. $\Call{Refine}{\cdot}$ uses the frontier refinement heuristics and selects the block with the lowest index from the frontier.
	
	\begin{enumerate}
		\item The initial partition is the singleton partition $\{\{s_0, \ldots, s_9, \lightning\}\}$. This yields the block-switching pair $(\{s_0, \ldots, s_9, \lightning\}, \varnothing)$, which is selected for refinement. The associated winning regions are $\ggWin(\ggGame[\varnothing]) = \{s_4, s_7\}$, and $\ggWin(\ggGame[\{s_0, \ldots, s_9, \lightning\}]) = \{s_0, \ldots, s_8\}$. Therefore, the frontier is $\Frontier(\{s_0, \ldots, s_9, \lightning\}, \varnothing) = \{s_3, s_6, s_8\}$ and state $s_3$ is split from the rest of its block.
		
		\item The new partition is $\{\{s_0, s_1, s_2, s_4, \ldots, s_9, \lightning\}, \{s_3\}\}$. This yields block-switching pairs $(\{s_0, s_1, s_2, s_4, \ldots, s_9, \lightning\}, \varnothing)$ and $(\{s_3\}, \varnothing)$. The former block-switching pair is selected for refinement. The associated winning regions are $\ggWin(\ggGame[\varnothing]) = \{s_4, s_7\}$ and $\ggWin(\ggGame[\{s_0, s_1, s_2, s_4, \ldots, s_9, \lightning\}]) = \{ s_0, \ldots, s_8 \}$. Therefore, the frontier is $\Frontier(\{s_0, s_1, s_2, s_4, \ldots, s_9, \lightning\}, \varnothing) = \{s_6, s_8\}$ and state $s_6$ is split from the rest of its block.
		
		\item The new partition is $\{\{s_0, s_1, s_2, s_4, s_5, s_7, s_8, s_9, \lightning\}, \{s_3\}, \{s_6\}\}$. This yields block-switching pairs $(\{s_0, s_1, s_2, s_4, s_5, s_7, s_8, s_9, \lightning\},\varnothing)$ (which is selected for refinement), $(\{s_3\}, \varnothing)$ and $(\{s_6\}, \{s_0, s_1, s_2, s_4, s_5, s_7, s_8, s_9, \lightning\})$. The associated winning regions are $\ggWin(\ggGame[\varnothing]) = \{s_4, s_7\}$ and $\ggWin(\ggGame[\{s_0, s_1, s_2, s_4, s_5, s_7, s_8, s_9, \lightning\}])=$ $\{s_0,$ $s_1, s_2, s_4, s_7\}$. Therefore, the frontier is $\Frontier(\{s_0, s_1, s_2, s_4, s_5, s_7, s_8, s_9, \lightning\},\varnothing = \{s_2\}$ and state $s_2$ is split from the rest of its block.
		
		\item The new partition is $\{\{s_0, s_1, s_4, s_5, s_7, s_8, s_9, \lightning\}, \{s_3\}, \{s_6\}, \{s_2\}\}$. This yields block-switching pairs $(\{s_0, s_1, s_4, s_5, s_7, s_8, s_9, \lightning\}, \{s_6\})$ (which is selected for refinement), $(\{s_3\}, \varnothing)$, $(\{s_6\}, \{s_0, s_1, s_4, s_5, s_7, s_8, s_9, \lightning\})$ and $(\{s_2\}, \varnothing)$. The associated winning regions are $\ggWin(\ggGame[\{s_6\}]) = \{s_4, s_7\}$ and $\ggWin(\ggGame[\{s_0, s_1, s_4, s_5, s_6, s_7, s_8, s_9, \lightning\}]) = \{s_0, \ldots, s_8\}$. Therefore, the frontier is $\Frontier(\{s_0, s_1, s_4, s_5, s_7, s_8, s_9, \lightning\}, \{s_6\}) = \{s_8\}$ and state $s_8$ is split from the rest of the block.
		
		\item The new partition is $\{\{s_0, s_1, s_4, s_5, s_7, s_9, \lightning\}, \{s_3\}, \{s_6\}, \{s_2\}, \{s_8\}\}$. This yields block-switching pairs $(\{s_3\}, \varnothing)$, $(\{s_6\}, \{s_8\})$, $(\{s_2\}, \varnothing)$ and $(\{s_8\}, \{s_6\})$. As every block-switching pair $(B, C)$ has a singleton block $B$, the algorithm terminates.
	\end{enumerate}
	
	To compute responsibility values, only states $s_2$, $s_3$, $s_6$ and $s_8$ need be considered when iterating potential switching pairs. This significantly reduces runtime. In the example, the final responsibility values are $s_2 \mapsto 5 / 12$, $s_3 \mapsto 5 /12 $, $s_6 \mapsto 1 / 12$ and $s_8 \mapsto 1 / 12$. Using the frontier heuristics therefore identified exactly the states with positive responsibility.
\end{example}

For safety and reachability objectives, the frontier always contains a state with positive responsibility. As a first step towards the proof, the following lemma shows that the ``fallback'' option of choosing an arbitrary state from $\Delta$ is not necessary for these objective classes.

\begin{lemma}
    \label{lem:frontier_no_fallback}
    Let $(B, C)$ be a block-switching pair. For reachability and safety objectives, $\Frontier(B, C) \neq \varnothing$.
\end{lemma}

\begin{proof}
	Let $\Delta = \ggWin(\ggGame[C \cup B]) \setminus \ggWin(\ggGame[C])$.
	We first prove the statement for a safety objective $\generally \lnot F$. Because a block-switching pair exists, $F$ is non-empty and reachable from initial state $\tsInit$. Regardless of which states Player $\ggPOne$ controls, $F$ is not in his winning region. Therefore, $\tsStates \setminus \ggWin(\ggGame[C \cup B]) \neq \varnothing$. Because Player $\ggPTwo$ wins $\ggGame[C]$, there is a path from $\tsInit$ to $F$ that does not visit $\ggWin(\ggGame[C])$. As $\tsInit \in \Delta$ and $F \subseteq \tsStates \setminus \ggWin(\ggGame[C \cup B])$, there is some state in $\Delta$ that has a transition to $\tsStates \setminus \ggWin(\ggGame[C \cup B])$.
    
    Every state in $s \in \Delta$ with such a transition is controlled by Player $\ggPOne$ in $\ggGame[C \cup B]$: Otherwise, Player $\ggPTwo$ could play the transition to $\tsStates \setminus \ggWin(\ggGame[C \cup B])$ and $s$ would itself not be in $\Delta$. For the same reason, every such $s$ has a transition to $\ggWin(\ggGame[C \cup B])$.
    At least one state $s \in \Delta$ with a transition to $\tsStates \setminus \ggWin(\ggGame[C \cup B])$ is not controlled by Player $\ggPOne$ in $\ggGame[C]$. Otherwise, Player $\ggPOne$ could always play a transition from $s$ to $\ggWin(\ggGame[C \cup B])$, making $F$ unreachable from $\tsInit$  in $\ggGame[C]$. This is a contradiction, as Player $\ggPTwo$ wins $\ggGame[C]$.
    Because at least one $s \in \Delta$ has a transition to $\tsStates \setminus \ggWin(\ggGame[C \cup B])$, because all such $s$ are in $C \cup B$ and because at least one such $s$ is not in $B$, it holds that $\Frontier(B, C) \neq \varnothing$.
    
    We now prove the statement for a reachability objective $\eventually F$. Because a block-switching pair exists, $F$ is non-empty and reachable from initial state $\tsInit$. Regardless of which states Player $\ggPOne$ controls, $F$ is in his winning region. Therefore, $\ggWin(\ggGame[C]) \neq \varnothing$. Because Player $\ggPOne$ wins $\ggGame[C \cup B]$, there is a path from $\tsInit$ to $F$ that does not visit $\tsStates \setminus \ggWin(\ggGame[C \cup B])$. As $\tsInit \in \Delta$ and $F \subseteq \ggWin(\ggGame[C])$, there is some state in $\Delta$ that has a transition to $\ggWin(\ggGame[C])$.
    
    No state in $s \in \Delta$ with such a transition is controlled by Player $\ggPOne$ in $\ggGame[C]$. Otherwise, Player $\ggPOne$ could play the transition to $\ggWin(\ggGame[C])$ and $s$ would itself be in $\ggWin(\ggGame[C])$. For the same reason, every such $s$ has a transition to $\tsStates \setminus \ggGame[C]$.
    At least one state $s \in \Delta$ with a transition to $\ggWin(\ggGame[C])$ is controlled by Player $\ggPOne$ in $\ggGame[C \cup B]$. Otherwise, Player $\ggPTwo$ could play the transition to $\tsStates \setminus \ggWin(\ggGame[C])$, making $F$ unreachable from $\tsInit$ in $\ggGame[C \cup B]$. This is a contradiction, as Player $\ggPOne$ wins $\ggGame[C \cup B]$.
    Because at least one $s \in \Delta$ has a transition to $\ggWin(\ggGame[C])$, because no such $s$ are in $B$ and because at least one such $s$ is in $C \cup B$, it holds that $\Frontier(B, C) \neq \varnothing$.
\end{proof}

\begin{proposition}[Usefulness of the frontier heuristics for safety and reachability objectives]
	\label{prop:refinement_frontier_useful}
    Let $(B, C)$ be a block-switching pair. For safety and reachability objectives, at least one state in $\Frontier(B, C)$ has positive responsibility.
\end{proposition}

\begin{proof}
	We first prove the statement for safety objective $\generally \lnot F$. Because a block-switching pair exists, $F$ is non-empty and reachable from initial state $\tsInit$. We first show that Player $\ggPOne$ wins $\ggGame[C \cup \Frontier(B, C)]$ by playing as follows: For states in $C$, follow the strategy used to achieve $\ggWin(\ggGame[C])$. For states in $\Frontier(B, C)$, take any transition to $\ggWin(\ggGame[C \cup B])$. Every state in $\Frontier(B, C)$ has a transition to $\ggWin(\ggGame[C \cup B])$, as otherwise, it would itself not be in $\ggWin(\ggGame[C \cup B])$.
	
	We show that Player $\ggPOne$ wins with this strategy by contradiction. Assume that Player $\ggPTwo$ were to win $\ggGame[C \cup \Frontier(B, C)]$. Then Player $\ggPTwo$ has a strategy that induces a path $\tsRun$ from $\tsInit$ to $F$ without visiting $\ggWin(\ggGame[C])$. Let $\tsRun_i$ be the final state of $\tsRun$ that is in $\ggWin(\ggGame[C \cup B]) \setminus \ggWin(\ggGame[C])$. This exists because $\tsInit \in \ggWin(\ggGame[C \cup B]) \setminus \ggWin(\ggGame[C])$ and $F \subseteq \tsStates \setminus \ggWin(\ggGame[C \cup B])$. Then $\tsRun_i$ is controlled by Player $\ggPOne$ in $\ggGame[C \cup B]$. Otherwise, Player $\ggPTwo$ could play the transition to $\tsStates \setminus \ggWin(\ggGame[C \cup B])$ and $\tsRun_i$ would not be in $\ggWin(\ggGame[C \cup B]) \setminus \ggWin(\ggGame[C])$. Therefore, $\tsRun_i \in \Frontier(B, C)$ and thus Player $\ggPOne$ can prevent Player $\ggPTwo$ from playing $\tsRun$ by playing a different transition in $\tsRun_i$. Such a transition exists because otherwise, $\tsRun_i \notin \ggWin(\ggGame(C \cup B))$.
	
	To obtain a switching pair, observe that Player $\ggPTwo$ wins $\ggGame[C]$ and Player $\ggPOne$ wins $\ggGame[C \cup \Frontier(B, C)]$. Therefore, successively giving Player $\ggPOne$ control of the states in $\Frontier(B, C)$ will eventually allow him to win. More formally, there is some $s \in \Frontier(B, C)$ and $\Frontier' \subseteq \Frontier(B, C)$ such that Player $\ggPTwo$ wins $\ggGame[C \cup \Frontier']$ and Player $\ggPOne$ wins $\ggGame[C \cup \Frontier' \cup \{s\}]$. Then $(s, C \cup \Frontier')$ forms a switching pair and thus $s$ has positive responsibility.
	
    We now prove the statement for a reachability objective $\eventually F$. Because a block-switching pair exists, $F$ is non-empty and reachable from initial state $\tsInit$. For every $s \in \Frontier(B, C)$, there is a transition to $S \setminus \ggWin(\ggGame[C])$, as otherwise, $s$ would be included in $\ggWin(\ggGame[C])$. Therefore, if Player $\ggPTwo$ controls the states in $\Frontier(B, C)$, she can prevent any state in $\ggWin(\ggGame[C \cup B]) \setminus \ggWin(\ggGame[C])$ from reaching $\ggWin(\ggGame[C])$. Because $F \subseteq \ggWin(\ggGame[C])$, this implies that Player $\ggPTwo$ wins $\ggGame[S \setminus \Frontier(B, C)]$. On the other hand, Player $\ggPOne$ wins $\ggGame[S]$ (otherwise, no block-switching pairs would exist). Similar to the safety case, there exists $\Frontier' \subseteq \Frontier(B, C)$ and $s \in \Frontier(B, C)$ such that Player $\ggPTwo$ wins $\ggGame[(S \setminus \Frontier(B, C)) \cup \Frontier']$ and $\ggGame[(S \setminus \Frontier(B, C)) \cup \Frontier' \cup \{s\}]$. Then $(s, (S \setminus \Frontier(B, C)) \cup \Frontier')$ forms a switching pair and thus $s$ has positive responsibility.
\end{proof}

\begin{figure}[t]
	\centering
	\begin{tikzpicture}[shorten >=1pt,node distance=1.2cm,on grid,auto, state/.style={circle,inner sep=1pt}] 
		\node[state,initial,initial text=] (s0)   {$s_0$};
		\node[state] (s1) [right=of s0] {$s_1$};
		\node[state] (s2) [above=of s1] {$s_2$};
		\node[state] (sbad) [right=of s1] {$\lightning$};
		
		\path (s0) edge[counterexample] node  {} (s1);
		\path (s1) edge[counterexample] node  {} (sbad);
		\path (sbad) edge[loop right, counterexample] node  {} (sbad);
		
		\path[->] (s0) edge node  {} (s1);
		
		\path[->] (s1) edge node  {} (s2);
		\path[->] (s1) edge node  {} (sbad);
		
		\path[->] (s2) edge node  {} (sbad);
		
		\path[->] (s1) edge [loop below] node  {} (s1);
		\path[->] (s2) edge [loop left] node  {} (s2);
		\path[->] (sbad) edge [loop right] node  {} (sbad);
	\end{tikzpicture}
	\caption{A transition system with safety objective $\Omega= \generally \lnot \lightning$ and run $\tsRun = s_0 s_1 \lightning^\omega$ violating $\Omega$. For $B = \{s_0, s_1, s_2, \lightning\}$ and $C=\varnothing$, the frontier $\Frontier(B, C)$ contains $s_1$ and $s_2$, but (individual) responsibility of $s_2$ is $0$.}
	\label{fig:frontier_no_responsibility}
\end{figure}

\begin{remark}
	At least one state in the frontier has positive responsibility, but the frontier can contain states with no responsibility. An example is given in \Cref{fig:frontier_no_responsibility}. There, the frontier contains $s_1$ and $s_2$, but only $s_1$ has positive responsibility. This is because $s_2$ can only be reached through $s_1$: If Player $\ggPOne$ controls $s_1$, then controlling $s_2$ is not necessary for winning (because $s_0 s_1^\omega$ is winning for Player $\ggPOne$. If Player $\ggPTwo$ controls $s_1$, then Player $\ggPTwo$ can force $s_0 s_1 \lightning$ and win without visiting $s_2$.
	
	\Cref{sec:experimental} experimentally evaluates different schemes for choosing a state from the frontier.
\end{remark}

\begin{figure}[t]
	\centering
	\begin{tikzpicture}[shorten >=1pt,node distance=1.2cm,on grid,auto, state/.style={circle,inner sep=1pt}] 
		\node[state,initial,initial text=] (s0)   {$s_0$};
		\node[state,fstate] (s1) [right=2.0cm of s0] {\phantom{$s_1$}};
		\node[state] (s2) [above right=0.8cm and 1.0cm of s0] {$s_2$};
		\node[state] (s3) [below right=0.8cm and 1.0cm of s0] {$s_3$};
		
		\path (s0) edge[counterexample] node  {} (s1);
		\path (s1) edge[loop right, counterexample] node  {} (s1);
		\node[state,fstate, fill=white] (s1clone) [right=2.0cm of s0] {$s_1$};
		
		\path[->] (s0) edge node  {} (s1);
		
		\path[->] (s1) edge node  {} (s2);
		\path[->] (s1) edge node  {} (s3);
		
		\path[->] (s2) edge node  {} (s0);
		
		\path[->] (s3) edge node  {} (s0);
		
		\path[->] (s1) edge [loop right] node  {} (s1);
	\end{tikzpicture}
	\caption{A transition system with Büchi objective $\Omega= \generally \eventually s_3$ and run $\tsRun = s_0 s_1^\omega$ violating $\Omega$. The frontier $\Frontier(\{s_0, s_1, s_2, s_3\}, \varnothing)$ is empty.}
	\label{fig:no_delta_for_buechi}
\end{figure}

\begin{remark}
	\Cref{lem:frontier_no_fallback} does not hold for Büchi objectives. Consider the transition system in \Cref{fig:no_delta_for_buechi} with Büchi objective $\Omega = \generally \eventually s_3$. Let $B = \{s_0, s_1, s_2, s_3\}$ and $C = \varnothing$. Then $\ggWin(\ggGame[C]) = \varnothing$ and $\ggWin(\ggGame[C \cup B]) = \{s_0, s_1, s_2, s_3\}$. There is no transition from a state in $\ggWin(\ggGame[C \cup B]) \setminus \ggWin(\ggGame[C])$ that leaves this set. Therefore, $\Frontier(B, C) = \varnothing$.
\end{remark}
\section{Implementation and evaluation}

We have developed a unified tool to compute responsibility values. The tool is available as a reproducible artifact \cite{Artifact} and the current version is hosted on Github\footnote{\url{https://github.com/johannesalehmann/SVaBResp}}. The tool computes responsibility according to the following pipeline:

\begin{enumerate}
    \item \textbf{Model building:} The model is given as a \textsc{Prism} \cite{PrismModelChecker} file. The model must not contain any probabilistic choices. It is built according to the standard semantics of the \textsc{Prism} language. This no longer requires external tools, thereby making the tool much easier to install and distribute.
    \item \textbf{Pre-processing:} Several heuristics are used to identify states that are guaranteed to have no responsibility. This includes
    \begin{itemize}
        \item states with only one outgoing transition and
        \item states that are in the winning region of Player~$\ggPOne$ when Player~$\ggPTwo$ controls all states or that are not in the winning region of Player~$\ggPOne$ when Player~$\ggPOne$ controls all states.
    \end{itemize}
    Previous implementations do not provide the second heuristics, which leads to significant pruning in many models.
    \item \textbf{State grouping} Optionally, a state grouping scheme is applied to the model, as described in \cite{ActorBasedResponsibility}. State grouping may additionally require pre-processing the \textsc{Prism} input. The model, objective and partitioning of the state space then form a cooperative game.
    \item \textbf{Responsibility computation} The responsibility of the players in the cooperative game is then analysed, either by directly enumerating all coalitions or by first performing refinement to identify the set of responsible states.
\end{enumerate}

The tool is highly modular and individual stages can be replaced, extended or used independently of the rest of the tool.

\subsection{Visualisation}

\begin{figure}
	\centering
	\includegraphics[width=0.95\textwidth]{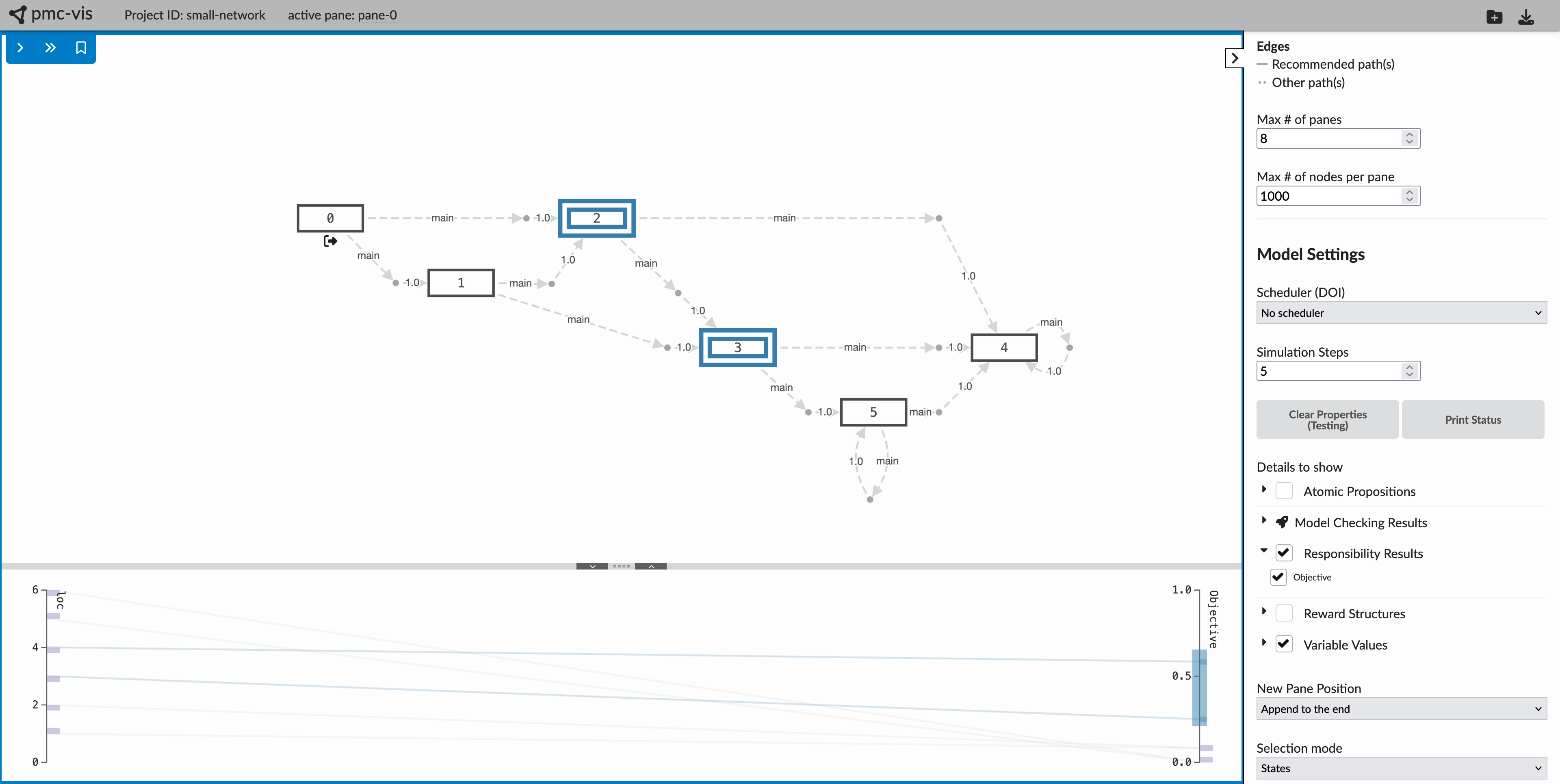}
	\caption{A screenshot of PMC-Vis displaying responsibility values. The main view shows the model as a graph structure, the right panel allows selecting which state properties to visualise (in this case, variable values and responsibility values are selected) and the bottom panel displays the selected properties. The model has a single variable \texttt{loc} and a single objective, for which the responsibility values are displayed on the right of the bottom panel. The lines in the bottom panel correspond to the states. For example, in the state with the highest responsibility value, \texttt{loc} has value $4$. The bottom view allows filtering states, which affects the main view: In the example, the states with the highest responsibility are selected in the bottom view and therefore highlighted in the main view.}
	\label{fig:pmc_vis_pcp}
\end{figure}

Similar to the prototypical implementations of previous papers, the tool is invoked primarily through the command line. This is very portable, but makes it difficult to compare the responsibility results to other properties of the states, such as model checking results and atomic propositions.

To this end, we provide an integration with PMC-VIS \cite{PMCVIS}, an interactive visualisation tool for probabilistic model checking. The integration is available as a Docker container \cite{PMCVisDocker}. PMC-VIS provides a browser-based interface that allows exploring the state space of a model. A \emph{parallel coordinates plot} displays the properties of the states, including the value of model variables, which atomic propositions are enabled, model checking results and -- with the integration developed by us -- responsibility values. The parallel coordinates plot is interactive and states can be filtered based on which values they fulfil. PMC-VIS supports models with multiple properties, yielding multiple responsibility values per state.

In Figure~\ref{fig:pmc_vis_pcp}, all states with $x=0$ and a responsibility of at least $0.2$ are selected. The parallel coordinates plot highlights the other properties of the selected states and in the main view, the states are selected.

Additionally, PMC-VIS provides an interactive \textsc{Prism} editor. In this view, our integration provides buttons to compute per-module and per-action responsibility. Modules and actions with positive responsibility are highlighted in the code view and a tool-tip provides detailed information of responsibility. An example of this is given in Figure~\ref{fig:pmc_vis_code} as part of the case study in \Cref{sec:casestudyone} .

\subsection{Refinement}

\label{sec:experimental}

The refinement algorithm is based on the pseudo-code given in Algorithm~\ref{alg:refinement}. The implementation makes use of additional optimisations, such as ignoring singleton blocks from $\textsc{ComputeHasSP}(\stateGroups)$.

For suitable models, the refinement algorithm reduces responsibility computation time significantly and enables analysis of models that previously timed out. All tests were run on an M2 MacBook Pro with 24 GB or RAM. Each result is the average of 10 runs.

\paragraph{Models.} The heuristics are evaluated on the following models.

\begin{itemize}
	\item \texttt{generals} models an expanded version of the two generals problem \cite{TwoGenerals}, with the objective of either all generals attacking or none of them attacking.
	\item \texttt{station} models a large railway station, where trains must avoid certain tracks.
	\item \texttt{philosophers} models the dining philosophers problem. As every player has positive responsibility, the refinement algorithm cannot improve performance here. However, the performance cost of running refinement before the brute-force algorithm is small, as Table~\ref{tab:initial_partition} shows.
	\item \texttt{large\_frontier\_reach} is a stress test that has a large frontier, both with states that can reach $\ggWin(\ggGame[C])$ and states than can reach $\tsStates \setminus \ggWin(\ggGame[C \cup B])$ for block-switching pair $(B, C)$.
	\item \texttt{large\_frontier\_safety} is similar, except with a safety objective.
	\item \texttt{almost\_empty\_frontier} also has a large frontier, but only one state in the frontier has positive responsibility.
	\item \texttt{centrifuge} models a medical lab. It is discussed in more detail in \Cref{sec:casestudyone}.
	\item \texttt{clouds} models the model from Figure~\ref{fig:refinement_sketch}, where each cloud contains $100\;000$ states. Only a single state has positive responsibility.
\end{itemize}

\begin{table}[t]{
		\setlength{\tabcolsep}{8.3pt}
		\caption{Overview of the models used to evaluate refinement heuristics.}
		\label{tab:model_overview}
		{
			\centering		
			\begin{tabular}{llcrrr}
				\toprule
				&&&&\multicolumn{2}{c}{players}\\
				\cmidrule(lr){5-6}
				\multicolumn{1}{c}{name}
				& \multicolumn{1}{c}{objective}
				& \multicolumn{1}{c}{grouping}
				& \multicolumn{1}{c}{states}
				& \multicolumn{1}{c}{total}
				& \multicolumn{1}{c}{resp.}
				\\\midrule
				\texttt{generals} &
				reachability &
				- &
				60 &
				38 &
				20 \\
				\texttt{station} &
				safety &
				- &
				84 &
				27 &
				24 \\
				\texttt{philosophers} &
				safety &
				- &
				36 &
				22 &
				22 \\
				\texttt{large\_frontier\_reach} &
				reachability &
				- &
				41 &
				39 &
				21 \\
				\texttt{large\_frontier\_safety} &
				safety &
				- &
				41 &
				39 &
				21 \\
				\texttt{almost\_empty\_frontier} &
				reachability &
				- &
				101 &
				100 &
				1 \\
				\texttt{centrifuges} &
				reachability &
				modules &
				$1\;715\;201$ &
				53 &
				2 \\
				\texttt{clouds} &
				reachability &
				- &
				$999\;002$ &
				$332\;999$ &
				1 \\
				\texttt{complex\_clouds} &
				Büchi &
				- &
				$999\;999$ &
				$857\;136$ &
				4 \\
				\bottomrule
			\end{tabular}\par
		}
}\end{table}

An overview of the model parameters is given in Table~\ref{tab:model_overview}. Here, \emph{states} indicates the number of states of the model and \emph{total players} denotes the number of players in the corresponding cooperative game. For models without grouping, there is one player for every state unless the state trivially has no responsibility (cf. the pre-processing step described above). For models with grouping, there is one player per group. \emph{Resp. players} indicates how many players have positive responsibility.

\paragraph{Initial partition.}

\begin{table}[t]{
		\setlength{\tabcolsep}{7.5pt}
		\newcommand{\statCell}[2][s\phantom{i}]{\hspace{-0.15em}#2\,#1}
		\newcommand{\unrefinedCell}[2][s\phantom{i}]{#2\,#1}
		\caption{Comparison of runtimes for different initial partition sizes $n$. TO indicates that computation did not finish without 60 seconds.}
		\label{tab:initial_partition}
		{
			\centering		
			\begin{tabular}{lrrrrrr}
				\toprule
				& \multicolumn{1}{c}{\hspace{-3.25em}\emph{{no refinement}}}
				& \multicolumn{1}{c}{\emph{$n=1$}}
				& \multicolumn{1}{c}{\emph{$n=2$}}
				& \multicolumn{1}{c}{\emph{$n=3$}}
				& \multicolumn{1}{c}{\emph{$n=4$}}
				& \multicolumn{1}{c}{\emph{$n=5$}}
				\\\midrule
				\texttt{generals} &
				TO &
				\statCell[ms]{963} &
				\statCell{1.20} &
				\statCell{1.61} &
				\statCell{2.83} &
				\statCell{4.90} \\
				\texttt{station} &
				\statCell{25.93} &
				\statCell{7.65} &
				\statCell{15.73} &
				\statCell{24.48} &
				\statCell{20.22} &
				\statCell{23.35} \\
				\texttt{philosophers} &
				\statCell[ms]{729} &
				\statCell[ms]{913} &
				\statCell[ms]{882} &
				\statCell[ms]{898} &
				\statCell[ms]{891} &
				\statCell[ms]{884} \\
				\texttt{large\_frontier\_reach} &
				TO &
				TO &
				TO &
				TO &
				TO &
				TO \\
				\texttt{large\_frontier\_safety} &
				TO &
				TO &
				TO &
				TO &
				TO &
				TO \\
				\texttt{almost\_empty\_frontier} &
				TO &
				TO &
				TO &
				TO &
				\statCell[ms]{66} &
				\statCell[ms]{13} \\
				\texttt{centrifuges} &
				TO &
				\statCell{23.14} &
				\statCell{23.23} &
				\statCell{23.75} &
				\statCell{24.53} &
				\statCell{26.37} \\
				\texttt{clouds} &
				TO &
				\statCell{1.73} &
				\statCell{1.79} &
				\statCell{1.82} &
				\statCell{1.95} &
				\statCell{2.17} \\
				\texttt{complex\_clouds} &
				TO &
				\statCell{3.01} &
				\statCell{4.07} &
				\statCell{4.83} &
				\statCell{6.36} &
				\statCell{9.15} \\
				\bottomrule
			\end{tabular}
			\par
		}
}\end{table}

To construct the initial partition, the players are randomly assigned to $n$ blocks. For $1 \leq n \leq 5$, the effect this has on runtime is depicted in Table~\ref{tab:initial_partition} (for block selection heuristics \emph{random} and splitting heuristics \emph{frontier-random}).

The largest change is observed in \texttt{almost\_empty\_frontier}, which goes from a timeout to being solved in milliseconds as $n$ increases. In this model, the frontier heuristics is unable to identify which frontier state responsible, so it has to randomly select states until finding the right one. By starting with several blocks, fewer states have to be selected on average before finding the right one. On the other hand, increasing $n$ leads to worse runtime in models where the frontier heuristics is effective, such as \texttt{generals} and \texttt{complex\_clouds}. Here, more initial blocks provide little benefit, while increasing how many coalitions have to be investigated. In models where many states are responsible, the number of initial blocks makes little difference.

\begin{table}[b]{
		\setlength{\tabcolsep}{13.1pt}
		\newcommand{\statCell}[2][s\phantom{i}]{#2\,#1}
		\newcommand{\unrefinedCell}[2][s\phantom{i}]{#2\,#1}
		\caption{Comparison of runtimes for different block selection heuristics. Heuristics \emph{random} selects a block at random, while \emph{max} and \emph{min} select a block based on how much it increases the size of the winning region. TO indicates that computation did not finish without 60 seconds.}
		\newcommand{\noref}[1]{}
		\label{tab:block_selection}
		{
			\centering		
			\begin{tabular}{lrrrr}
				\toprule
				& \multicolumn{1}{c}{\emph{\emph{{random}}}}
				& \multicolumn{1}{c}{\emph{\emph{{max-$\Delta$}}}}
				& \multicolumn{1}{c}{\emph{\emph{{min-$\Delta$}}}}
				& \multicolumn{1}{c}{\emph{\emph{{min-frontier}}}}
				\\\midrule
    \texttt{generals} &
\statCell[ms]{997} &
\statCell[ms]{976} &
\statCell[ms]{983} &
\statCell[ms]{977} \\
\texttt{station} &
\statCell{7.46} &
\statCell{7.70} &
\statCell{7.64} &
\statCell{7.90} \\
\texttt{philosophers} &
\statCell[ms]{911} &
\statCell[ms]{910} &
\statCell[ms]{895} &
\statCell[ms]{897} \\
\texttt{large\_frontier\_reach} &
TO &
TO &
TO &
TO \\
\texttt{large\_frontier\_safety} &
TO &
TO &
TO &
TO \\
\texttt{almost\_empty\_frontier} &
TO &
TO &
TO &
TO \\
\texttt{centrifuges} &
\statCell{30.07} &
\statCell{23.53} &
\statCell{23.79} &
\statCell{23.32} \\
\texttt{clouds} &
\statCell{1.72} &
\statCell{1.72} &
\statCell{1.73} &
\statCell{1.74} \\
\texttt{complex\_clouds} &
\statCell{3.02} &
\statCell{3.06} &
\statCell{3.05} &
\statCell{3.05} \\
\bottomrule
			\end{tabular}\par
		}
}\end{table}
\paragraph{Choosing refinement blocks.} In every iteration, the set of non-singleton blocks $B$ with a block-switching pair $(C, B)$ is determined. To choose which of these to refine, we provide four heuristics. Heuristics \emph{random} chooses a block randomly, the heuristics \emph{max}-$\Delta$ chooses the block where $|\ggWin(C \cup B) \setminus \ggWin(C)|$ is maximal, heuristics \emph{min}-$\Delta$ chooses the block where it is minimal and heuristics \emph{min-frontier} selects the block with the smallest frontier. The intuition behind the last heuristics is that, for safety and reachability objectives, at least one state in the frontier has positive responsibility by Proposition~\ref{prop:refinement_frontier_useful}, so choosing a small frontier makes it more likely to hit that state.

Table~\ref{tab:block_selection} provides a runtime comparison (for initial partition $n=1$ and refinement heuristics \emph{frontier-random}). It is evident that the difference in performance is small. This is likely because any block that has a block-switching pair needs to be refined eventually.

\paragraph{Refining blocks.} A non-singleton block $B$ with a block-switching pair $(C, B)$ is refined by selecting some states $B' \subseteq B \cap (\ggWin(\ggGame[C \cup B]) \setminus \ggWin(\ggGame[C]))$ and replacing block $B$ by $B'$ and $B \setminus B'$. The heuristics \emph{random} randomly selects a state from $B \cap \ggWin(\ggGame[C \cup B]) \setminus \ggWin(\ggGame[C])$. The remaining heuristics are variants of the \emph{frontier} heuristics introduced in Definition~\ref{def:frontier}. \emph{Frontier-random} randomly selects a block from the frontier, \emph{frontier-max} selects the block with the most transitions to $\ggWin(\ggGame[C]) \cup \tsStates \setminus \ggWin(\ggGame[C \cup B])$, \emph{frontier-losing} selects the block with the most transitions to $ \tsStates \setminus \ggWin(\ggGame[C \cup B])$ and \emph{frontier-winning} selects the block with the most transitions to $\ggWin(\ggGame[C])$.

Table~\ref{tab:splitting_heuristics} shows the results (for initial partition $n=1$ and block selection heuristics \emph{random}). Heuristics \emph{random} times out on every model, indicating that \emph{frontier} is much more suitable for identifying responsible states. The different variants of \emph{frontier} perform almost identical on most models. The only exception are \texttt{large\_frontier\_reach} and \texttt{large\_frontier\_safety}, where only \emph{frontier-win} and \emph{frontier-los}, respectively, are able to finish computation before the timeout. The respective heuristics directly identify which states in the frontier are responsible, whereas the other heuristics randomly choose states, half of which have no responsibility.

\begin{table}[b]{
	\setlength{\tabcolsep}{11.1pt}
	\newcommand{\statCell}[2][s\phantom{i}]{\hspace{-0.8em}#2\,#1}
	\newcommand{\unrefinedCell}[2][s\phantom{i}]{#2\,#1}
	\caption{Comparison of runtimes for different splitting heuristics. TO indicates that computation did not finish without 60 seconds.}
	\label{tab:splitting_heuristics}
	{\centering
		\begin{tabular}{lrrrrr}
			\toprule
			& & \multicolumn{4}{c}{\emph{frontier}} \\
			\cmidrule(lr){3-6}
			& \multicolumn{1}{c}{\hspace{-2em}\emph{{random}}}
			& \multicolumn{1}{c}{\emph{{-random}}}
			& \multicolumn{1}{c}{\emph{{-max}}}
			& \multicolumn{1}{c}{\emph{{-losing}}}
			& \multicolumn{1}{c}{\emph{{-winning}}}
			\\\midrule
			\texttt{generals} &
			TO &
			\statCell[ms]{991} &
			\statCell[ms]{988} &
			\statCell{1.01} &
			\statCell[ms]{981} \\
			\texttt{station} &
			TO &
			\statCell{7.88} &
			\statCell{7.72} &
			\statCell{7.62} &
			\statCell{7.64} \\
			\texttt{philosophers} &
			TO &
			\statCell[ms]{901} &
			\statCell[ms]{907} &
			\statCell[ms]{900} &
			\statCell[ms]{911} \\
			\texttt{large\_frontier\_reach} &
			TO &
			TO &
			TO &
			\statCell[ms]{295} &
			TO \\
			\texttt{large\_frontier\_safety} &
			TO &
			TO &
			TO &
			TO &
			\statCell[ms]{209} \\
			\texttt{almost\_empty\_frontier} &
			TO &
			TO &
			TO &
			TO &
			TO \\
			\texttt{centrifuges} &
			TO &
			\statCell{29.68} &
			\statCell{23.18} &
			\statCell{23.24} &
			\statCell{23.37} \\
			\texttt{clouds} &
			TO &
			\statCell{1.72} &
			\statCell{1.72} &
			\statCell{1.72} &
			\statCell{1.72} \\
			\texttt{complex\_clouds} &
			TO &
			\statCell{3.05} &
			\statCell{2.98} &
			\statCell{3.04} &
			\statCell{2.95} \\
			\bottomrule
		\end{tabular}
		\par}
}\end{table}

\subsection{Case study: Sample analysis}
\label{sec:casestudyone}

This section analyses a model of a medical lab that analyses samples to demonstrate the efficacy of combining module-based grouping and refinement.

The system starts with a given number of clean and infected samples. There are multiple centrifuges that perform an identical analysis of the samples, which are modelled as individual \textsc{Prism} modules. Each centrifuge iteratively performs three steps: It receives a sample (either clean or infected) from a central supply system. The sample is applied to a sample strip and the speed at which the sample soaks into the strip is selected based on the type of sample and a probabilistic choice: An infected sample advances 2 or 3 centimetres per minute, whereas a clean sample advances by 0 or 1 centimetre. After 3 minutes have elapsed, the centrifuge determines the type of sample based on whether it soaked at least 4 centimetres of the test strip. It reports the result to a centralised counter.

To ensure all centrifuges operate correctly, we check the property that eventually, all samples are processed and the number of clean and infect results matches the number of clean and infected samples initially present.

Running a model checker on the model reveals that the property is not fulfilled. The number of clean results might be higher than the number of clean samples initially present. As the model specification has hundreds of lines and the model hundreds of thousands of states, manual analysis is challenging. Previously, applying responsibility techniques was also intractable, as individual-state-analysis is only tractable for up to roughly 30 states. Similarly, module-based analysis was intractable: For 10 centrifuges, there are 53 modules and synchronising actions. Group-based analysis using the brute-force algorithm is only feasible for up to roughly 30 groups.

\begin{figure}
	\centering
	\includegraphics[width=0.8\textwidth]{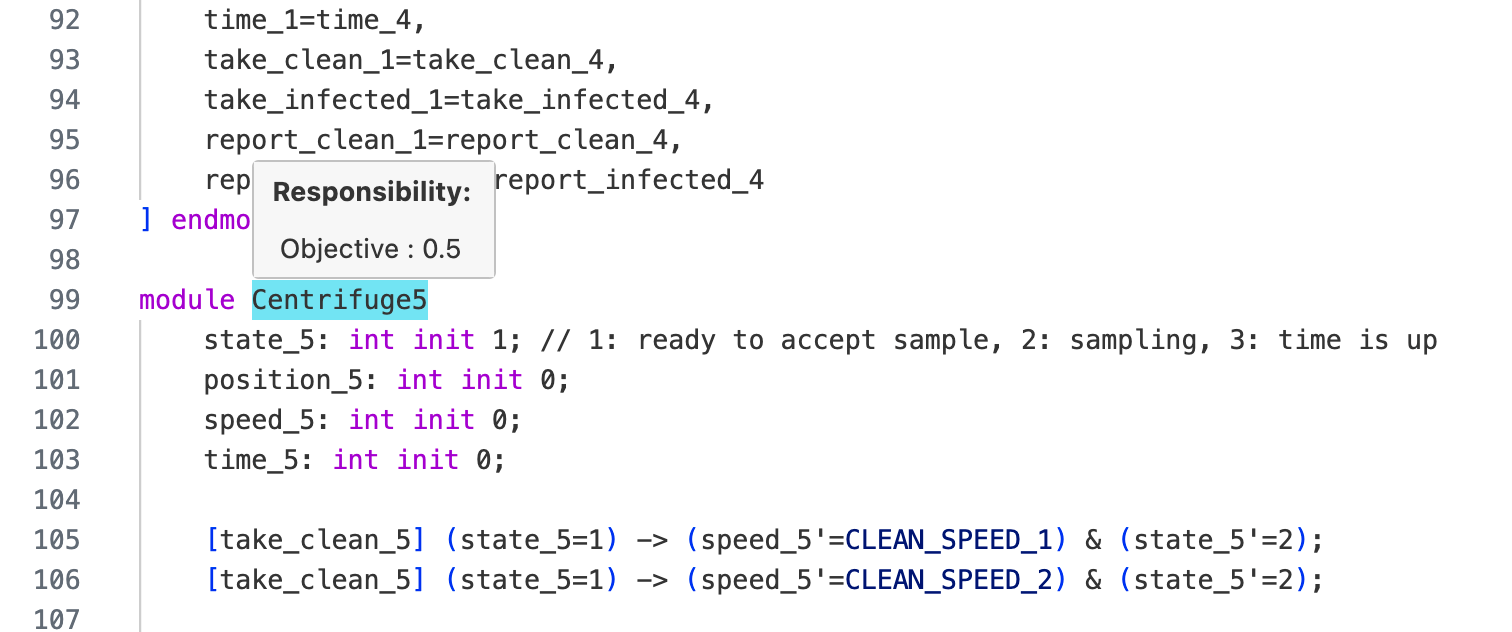}
	\caption{A screenshot of PMC-Vis displaying module responsibility in its code editor.}
	\label{fig:pmc_vis_code}
\end{figure}

However, combining module-based grouping with the refinement algorithm is fruitful: Computing responsibility takes less than 30 seconds and reveals that the scheduler and Centrifuge 5 both have a responsibility value of $\frac{1}{2}$. \Cref{fig:pmc_vis_code} shows how PMC-VIS highlights Centrifuge 5 in its code view. This suggests that Centrifuge 5 is faulty (and the scheduler has some responsibility, as it could ensure Centrifuge 5 never processes a sample). Analysis of the code of Centrifuge 5 reveals the error: It uses \texttt{<=} instead of \texttt{=} when deciding whether 3 minutes have passed, leading to a potential early abort.

\section{Conclusion}

We have extended the definition of backward responsibility \cite{BackwardRespTS} to reachability, Büchi and parity objectives. The results for safety objectives transfer to reachability objectives, while the optimistic case is more difficult for Büchi objectives. Nonetheless, by exploiting the structure of minimal winning coalitions, the set of responsible states can be computed in polynomial time. For parity objectives, no polynomial algorithm is known. For all objective classes, we have additionally provided upper bounds for the positivity, threshold and computation problem.

Secondly, we have presented a refinement algorithm that computes the set of states with positive responsibility. Our experimental evaluation shows that refinement enables analysis of models where previous techniques timed out. We have provided a theoretical foundation for the \emph{frontier} heuristics and shown that it outperforms random choice on several models. In a case study, we have demonstrated how responsibility values can be used to pin-point bugs in complex systems.
\bibliography{bib}

@inproceedings{ImportanceValuesTL,
    author       = {Corto Mascle and
    Christel Baier and
    Florian Funke and
    Simon Jantsch and
    Stefan Kiefer},
    title        = {Responsibility and verification: Importance value in temporal logics},
    booktitle    = {36th Annual {ACM/IEEE} Symposium on Logic in Computer Science, {LICS}
    2021, Rome, Italy, June 29 -- July 2, 2021},
    pages        = {1--14},
    publisher    = {{IEEE}},
    year         = {2021},
    doi          = {10.1109/LICS52264.2021.9470597},
}

@inproceedings{BackwardRespTS,
	author       = {Christel Baier and
	Roxane van den Bossche and
	Sascha Kl{\"{u}}ppelholz and
	Johannes Lehmann and
	Jakob Piribauer},
	title        = {Backward Responsibility in Transition Systems Using General Power
	Indices},
	booktitle    = {Thirty-Eighth {AAAI} Conference on Artificial Intelligence, {AAAI}
	2024, Thirty-Sixth Conference on Innovative Applications of Artificial
	Intelligence, {IAAI} 2024, Fourteenth Symposium on Educational Advances
	in Artificial Intelligence, {EAAI} 2014, February 20-27, 2024, Vancouver,
	Canada},
	pages        = {20320--20327},
	publisher    = {{AAAI} Press},
	year         = {2024},
	doi          = {10.1609/AAAI.V38I18.30013},
}

@inproceedings{ActorBasedResponsibility,
    author="Baier, Christel
    and Kl{\"u}ppelholz, Sascha
    and Lehmann, Johannes",
    title="Responsibility in Actor-Based Systems",
    bookTitle="Rebeca for Actor Analysis in Action: Essays Dedicated to Marjan Sirjani on the Occasion of Her 60th Birthday",
    year="2025",
    publisher="Springer Nature Switzerland",
    address="Cham",
    pages="44--69",
    doi="10.1007/978-3-031-85134-6\_3", 
}

@inproceedings{GameTheoreticResponsibility,
    author       = {Christel Baier and
    Florian Funke and
    Rupak Majumdar},
    title        = {A Game-Theoretic Account of Responsibility Allocation},
    booktitle    = {Proceedings of the Thirtieth International Joint Conference on Artificial
    Intelligence, {IJCAI} 2021, Virtual Event / Montreal, Canada, 19-27
    August 2021},
    pages        = {1773--1779},
    publisher    = {ijcai.org},
    year         = {2021},
    doi          = {10.24963/IJCAI.2021/244},
}

@book{ModelCheckingBook,
	title={Principles of model checking},
	author={Baier, Christel and Katoen, Joost-Pieter},
	year={2008},
	publisher={MIT press},
    isbn={978-0-262-02649-9},
}

@article{ShapleyValueOriginal,
    title={A value for n-person games},
    author={Shapley, Lloyd S},
    journal={Contribution to the Theory of Games},
    volume={2},
    year={1953}
}

@book{ShapleyValueHandbook,
    title={Handbook of the Shapley value},
    author={Algaba, Encarnaci{\'o}n and Fragnelli, Vito and S{\'a}nchez-Soriano, Joaqu{\'\i}n},
    year={2019},
    publisher={CRC Press}
}

@article{ParityGamesUniqueAcceptingRun,
    author       = {Marcin Jurdzinski},
    title        = {Deciding the Winner in Parity Games is in {UP} {\textbackslash}cap
    co-Up},
    journal      = {Inf. Process. Lett.},
    volume       = {68},
    number       = {3},
    pages        = {119--124},
    year         = {1998},
    doi          = {10.1016/S0020-0190(98)00150-1},
}

@incollection{ForwardBackward,
    title={The relation between forward-looking and backward-looking responsibility},
    author={Van de Poel, Ibo},
    booktitle={Moral responsibility: Beyond free will and determinism},
    pages={37--52},
    year={2011},
    publisher={Springer},
    doi={10.1007/978-94-007-1878-4\_3}
}

@article{ShapleyHandbook,
    title={The shapley value},
    author={Winter, Eyal},
    journal={Handbook of game theory with economic applications},
    volume={3},
    pages={2025--2054},
    year={2002},
    publisher={Elsevier}
}

@inproceedings{ZellerDeltaDebugging,
    author       = {Andreas Zeller},
    title        = {Isolating cause-effect chains from computer programs},
    booktitle    = {Proceedings of the Tenth {ACM} {SIGSOFT} Symposium on Foundations
    of Software Engineering 2002, Charleston, South Carolina, USA, November
    18-22, 2002},
    pages        = {1--10},
    publisher    = {{ACM}},
    year         = {2002},
    doi          = {10.1145/587051.587053},
}

@inproceedings{FriedenbergBlameworthiness,
    author       = {Meir Friedenberg and
    Joseph Y. Halpern},
    title        = {Blameworthiness in Multi-Agent Settings},
    booktitle    = {The Thirty-Third {AAAI} Conference on Artificial Intelligence, {AAAI}
    2019, The Thirty-First Innovative Applications of Artificial Intelligence
    Conference, {IAAI} 2019, The Ninth {AAAI} Symposium on Educational
    Advances in Artificial Intelligence, {EAAI} 2019, Honolulu, Hawaii,
    USA, January 27 - February 1, 2019},
    pages        = {525--532},
    publisher    = {{AAAI} Press},
    year         = {2019},
    doi          = {10.1609/AAAI.V33I01.3301525},
}

@inproceedings{LandsbergStatisticalFaultLocalisation,
    author       = {David Landsberg and
    Hana Chockler and
    Daniel Kroening and
    Matt Lewis},
    title        = {Evaluation of Measures for Statistical Fault Localisation and an Optimising
    Scheme},
    booktitle    = {Fundamental Approaches to Software Engineering - 18th International
    Conference, {FASE} 2015, Held as Part of the European Joint Conferences
    on Theory and Practice of Software, {ETAPS} 2015, London, UK, April
    11-18, 2015. Proceedings},
    series       = {Lecture Notes in Computer Science},
    volume       = {9033},
    pages        = {115--129},
    publisher    = {Springer},
    year         = {2015},
    doi          = {10.1007/978-3-662-46675-9\_8},
}

@article{HalpernPearlCauses,
    title={Causes and explanations: A structural-model approach. Part I: Causes},
    author={Halpern, Joseph Y and Pearl, Judea},
    journal={The British journal for the philosophy of science},
    year={2005},
    publisher={The University of Chicago Press},
    doi={10.1093/bjps/axi147}
}

@article{HalpernPearlExplanations,
    title={Causes and explanations: A structural-model approach. Part II: Explanations},
    author={Halpern, Joseph Y and Pearl, Judea},
    journal={The British journal for the philosophy of science},
    year={2005},
    publisher={The University of Chicago Press},
    doi={10.1093/bjps/axi148}
}

@inproceedings{YazdanpanahStrategicResponsibilityImperfectInformation,
    author       = {Vahid Yazdanpanah and
    Mehdi Dastani and
    Wojciech Jamroga and
    Natasha Alechina and
    Brian Logan},
    title        = {Strategic Responsibility Under Imperfect Information},
    booktitle    = {Proceedings of the 18th International Conference on Autonomous Agents
    and MultiAgent Systems, {AAMAS} '19, Montreal, QC, Canada, May 13-17,
    2019},
    pages        = {592--600},
    publisher    = {International Foundation for Autonomous Agents and Multiagent Systems},
    year         = {2019},
    url          = {http://dl.acm.org/citation.cfm?id=3331745},
}

@inproceedings{FromVerificationToCausalityBaseExplications,
    author       = {Christel Baier and
    Clemens Dubslaff and
    Florian Funke and
    Simon Jantsch and
    Rupak Majumdar and
    Jakob Piribauer and
    Robin Ziemek},
    title        = {From Verification to Causality-Based Explications (Invited Talk)},
    booktitle    = {48th International Colloquium on Automata, Languages, and Programming,
    {ICALP} 2021, July 12-16, 2021, Glasgow, Scotland (Virtual Conference)},
    series       = {LIPIcs},
    volume       = {198},
    pages        = {1:1--1:20},
    publisher    = {Schloss Dagstuhl - Leibniz-Zentrum f{\"{u}}r Informatik},
    year         = {2021},
    doi          = {10.4230/LIPICS.ICALP.2021.1},
}

@article{chockler2004responsibility,
    author       = {Hana Chockler and
    Joseph Y. Halpern},
    title        = {Responsibility and Blame: {A} Structural-Model Approach},
    journal      = {J. Artif. Intell. Res.},
    volume       = {22},
    pages        = {93--115},
    year         = {2004},
    doi          = {10.1613/JAIR.1391},
}

@book{MannaPnueliBook,
  author = {Zohar Manna and Amir Pnueli},
  title = {Temporal verification of reactive systems: safety},
  year = {1995},
  publisher = {Springer-Verlag},
  doi = {10.1007/978-1-4612-4222-2}
}

@article{aleksandrowicz2017computational,
    author       = {Gadi Aleksandrowicz and
    Hana Chockler and
    Joseph Y. Halpern and
    Alexander Ivrii},
    title        = {The Computational Complexity of Structure-Based Causality},
    journal      = {J. Artif. Intell. Res.},
    volume       = {58},
    pages        = {431--451},
    year         = {2017},
    doi          = {10.1613/JAIR.5229},
}

@article{chockler2008causes,
    author       = {Hana Chockler and
    Joseph Y. Halpern and
    Orna Kupferman},
    title        = {What causes a system to satisfy a specification?},
    journal      = {{ACM} Trans. Comput. Log.},
    volume       = {9},
    number       = {3},
    pages        = {20:1--20:26},
    year         = {2008},
    doi          = {10.1145/1352582.1352588},
}

@article{Beer2009ExplainingCounterexamplesCausality,
author       = {Ilan Beer and
Shoham Ben{-}David and
Hana Chockler and
Avigail Orni and
Richard J. Trefler},
title        = {Explaining counterexamples using causality},
journal      = {Formal Methods Syst. Des.},
volume       = {40},
number       = {1},
pages        = {20--40},
year         = {2012},
url          = {https://doi.org/10.1007/s10703-011-0132-2},
doi          = {10.1007/S10703-011-0132-2},
timestamp    = {Sun, 19 Jan 2025 14:32:12 +0100},
biburl       = {https://dblp.org/rec/journals/fmsd/BeerBCOT12.bib},
bibsource    = {dblp computer science bibliography, https://dblp.org}
}

@inproceedings{ParreauxCounterfactualCausalityDistanceFunctions,
    author       = {Julie Parreaux and
    Jakob Piribauer and
    Christel Baier},
    title        = {Counterfactual Causality for Reachability and Safety based on Distance
    Functions},
    booktitle    = {Proceedings of the Fourteenth International Symposium on Games, Automata,
    Logics, and Formal Verification, GandALF 2023, Udine, Italy, 18-20th
    September 2023},
    series       = {{EPTCS}},
    volume       = {390},
    pages        = {132--149},
    year         = {2023},
    doi          = {10.4204/EPTCS.390.9},
}

@article{Groce2006DistanceMetrics,
    author       = {Alex Groce and
    Sagar Chaki and
    Daniel Kroening and
    Ofer Strichman},
    title        = {Error explanation with distance metrics},
    journal      = {Int. J. Softw. Tools Technol. Transf.},
    volume       = {8},
    number       = {3},
    pages        = {229--247},
    year         = {2006},
    doi          = {10.1007/S10009-005-0202-0},
}

@inproceedings{TwoGenerals,
	title={Some constraints and tradeoffs in the design of network communications},
	author={Akkoyunlu, Eralp A and Ekanadham, Kattamuri and Huber, Richard V},
	booktitle={Proceedings of the fifth ACM symposium on Operating systems principles},
	pages={67--74},
	year={1975}
}

@article{banzhaf1964weighted,
    title={Weighted voting doesn't work: A mathematical analysis},
    author={Banzhaf III, John F},
    journal={Rutgers L. Rev.},
    volume={19},
    pages={317},
    year={1964},
    publisher={HeinOnline}
}

@inproceedings{Chockler2016Causality,
	author       = {Hana Chockler},
	editor       = {Gregor G{\"{o}}{\ss}ler and
	Oleg Sokolsky},
	title        = {Causality and Responsibility for Formal Verification and Beyond},
	booktitle    = {Proceedings First Workshop on Causal Reasoning for Embedded and safety-critical
	Systems Technologies, CREST@ETAPS 2016, Eindhoven, The Netherlands,
	8th April 2016},
	series       = {{EPTCS}},
	volume       = {224},
	pages        = {1--8},
	year         = {2016},
	url          = {https://doi.org/10.4204/EPTCS.224.1},
	doi          = {10.4204/EPTCS.224.1},
	timestamp    = {Mon, 03 Mar 2025 21:31:30 +0100},
	biburl       = {https://dblp.org/rec/journals/corr/Chockler16.bib},
	bibsource    = {dblp computer science bibliography, https://dblp.org}
}

@inproceedings{Gastin2007MinimalCounterexample,
	author       = {Paul Gastin and
	Pierre Moro},
	editor       = {Dragan Bosnacki and
	Stefan Edelkamp},
	title        = {Minimal Counterexample Generation for {SPIN}},
	booktitle    = {Model Checking Software, 14th International {SPIN} Workshop, Berlin,
	Germany, July 1-3, 2007, Proceedings},
	series       = {Lecture Notes in Computer Science},
	volume       = {4595},
	pages        = {24--38},
	publisher    = {Springer},
	year         = {2007},
	url          = {https://doi.org/10.1007/978-3-540-73370-6\_4},
	doi          = {10.1007/978-3-540-73370-6\_4},
	timestamp    = {Tue, 14 May 2019 10:00:36 +0200},
	biburl       = {https://dblp.org/rec/conf/spin/GastinM07.bib},
	bibsource    = {dblp computer science bibliography, https://dblp.org}
}

@inproceedings{Gastin2004MinimizationOfCounterexamples,
	author       = {Paul Gastin and
	Pierre Moro and
	Marc Zeitoun},
	editor       = {Susanne Graf and
	Laurent Mounier},
	title        = {Minimization of Counterexamples in {SPIN}},
	booktitle    = {Model Checking Software, 11th International {SPIN} Workshop, Barcelona,
	Spain, April 1-3, 2004, Proceedings},
	series       = {Lecture Notes in Computer Science},
	volume       = {2989},
	pages        = {92--108},
	publisher    = {Springer},
	year         = {2004},
	url          = {https://doi.org/10.1007/978-3-540-24732-6\_7},
	doi          = {10.1007/978-3-540-24732-6\_7},
	timestamp    = {Tue, 14 May 2019 10:00:36 +0200},
	biburl       = {https://dblp.org/rec/conf/spin/GastinMZ04.bib},
	bibsource    = {dblp computer science bibliography, https://dblp.org}
}

@inproceedings{Hansen2006MinimalCounterexamples,
	author       = {Henri Hansen and
	Antti Kervinen},
	title        = {Minimal Counterexamples in O(n log n) Memory and O(n{\^{}}2) Time},
	booktitle    = {Sixth International Conference on Application of Concurrency to System
	Design {(ACSD} 2006), 28-30 June 2006, Turku, Finland},
	pages        = {133--142},
	publisher    = {{IEEE} Computer Society},
	year         = {2006},
	url          = {https://doi.org/10.1109/ACSD.2006.11},
	doi          = {10.1109/ACSD.2006.11},
	timestamp    = {Fri, 24 Mar 2023 00:02:38 +0100},
	biburl       = {https://dblp.org/rec/conf/acsd/HansenK06.bib},
	bibsource    = {dblp computer science bibliography, https://dblp.org}
}

@inproceedings{Koelbl2020DynamicCauses,
	author       = {Martin K{\"{o}}lbl and
	Stefan Leue and
	Robert Schmid},
	editor       = {Nathalie Bertrand and
	Nils Jansen},
	title        = {Dynamic Causes for the Violation of Timed Reachability Properties},
	booktitle    = {Formal Modeling and Analysis of Timed Systems - 18th International
	Conference, {FORMATS} 2020, Vienna, Austria, September 1-3, 2020,
	Proceedings},
	series       = {Lecture Notes in Computer Science},
	volume       = {12288},
	pages        = {127--143},
	publisher    = {Springer},
	year         = {2020},
	url          = {https://doi.org/10.1007/978-3-030-57628-8\_8},
	doi          = {10.1007/978-3-030-57628-8\_8},
	timestamp    = {Tue, 25 Aug 2020 15:34:38 +0200},
	biburl       = {https://dblp.org/rec/conf/formats/KolblLS20.bib},
	bibsource    = {dblp computer science bibliography, https://dblp.org}
}

@inproceedings{Caltas2018CausalityForGeneralLTL,
	author       = {Georgiana Caltais and
	Sophie Linnea Guetlein and
	Stefan Leue},
	editor       = {Bernd Finkbeiner and
	Samantha Kleinberg},
	title        = {Causality for General {LTL}-definable Properties},
	booktitle    = {Proceedings 3rd Workshop on formal reasoning about Causation, Responsibility,
	and Explanations in Science and Technology, CREST@ETAPS 2018, Thessaloniki,
	Greece, 21st April 2018},
	series       = {{EPTCS}},
	volume       = {286},
	pages        = {1--15},
	year         = {2018},
	url          = {https://doi.org/10.4204/EPTCS.286.1},
	doi          = {10.4204/EPTCS.286.1},
	timestamp    = {Sun, 19 Jan 2025 13:42:18 +0100},
	biburl       = {https://dblp.org/rec/journals/corr/abs-1901-00588.bib},
	bibsource    = {dblp computer science bibliography, https://dblp.org}
}

@inproceedings{LeitnerFischer2014SpinCause,
	author       = {Florian Leitner{-}Fischer and
	Stefan Leue},
	editor       = {Neha Rungta and
	Oksana Tkachuk},
	title        = {SpinCause: a tool for causality checking},
	booktitle    = {2014 International Symposium on Model Checking of Software, {SPIN}
	2014, Proceedings, San Jose, CA, USA, July 21-23, 2014},
	pages        = {117--120},
	publisher    = {{ACM}},
	year         = {2014},
	url          = {https://doi.org/10.1145/2632362.2632371},
	doi          = {10.1145/2632362.2632371},
	timestamp    = {Tue, 06 Nov 2018 11:06:50 +0100},
	biburl       = {https://dblp.org/rec/conf/spin/Leitner-FischerL14.bib},
	bibsource    = {dblp computer science bibliography, https://dblp.org}
}

@inproceedings{LeitnerFischer2013CausalityChecking,
	author       = {Florian Leitner{-}Fischer and
	Stefan Leue},
	editor       = {Roberto Giacobazzi and
	Josh Berdine and
	Isabella Mastroeni},
	title        = {Causality Checking for Complex System Models},
	booktitle    = {Verification, Model Checking, and Abstract Interpretation, 14th International
	Conference, {VMCAI} 2013, Rome, Italy, January 20-22, 2013. Proceedings},
	series       = {Lecture Notes in Computer Science},
	volume       = {7737},
	pages        = {248--267},
	publisher    = {Springer},
	year         = {2013},
	url          = {https://doi.org/10.1007/978-3-642-35873-9\_16},
	doi          = {10.1007/978-3-642-35873-9\_16},
	timestamp    = {Tue, 14 May 2019 10:00:43 +0200},
	biburl       = {https://dblp.org/rec/conf/vmcai/Leitner-FischerL13.bib},
	bibsource    = {dblp computer science bibliography, https://dblp.org}
}

@book{arora2009computational,
	title={Computational complexity: a modern approach},
	author={Arora, Sanjeev and Barak, Boaz},
	year={2009},
	publisher={Cambridge University Press}
}

@misc{ArxivVersion,
	title={Backward Responsibility in Transition Systems Beyond Safety}, 
	author={Christel Baier and Rio Klatt and Sascha Klüppelholz and Johannes Lehmann},
	year={2025},
	eprint={2506.05192},
	archivePrefix={arXiv},
	primaryClass={cs.FL},
	url={https://arxiv.org/abs/2506.05192}, 
}

@inproceedings{PMCVIS,
	author       = {Max Korn and
	Juli{\'{a}}n M{\'{e}}ndez and
	Sascha Kl{\"{u}}ppelholz and
	Ricardo Langner and
	Christel Baier and
	Raimund Dachselt},
	editor       = {Carla Ferreira and
	Tim A. C. Willemse},
	title        = {{PMC-VIS:} An Interactive Visualization Tool for Probabilistic Model
	Checking},
	booktitle    = {Software Engineering and Formal Methods - 21st International Conference,
	{SEFM} 2023, Eindhoven, The Netherlands, November 6-10, 2023, Proceedings},
	series       = {Lecture Notes in Computer Science},
	volume       = {14323},
	pages        = {361--375},
	publisher    = {Springer},
	year         = {2023},
	url          = {https://doi.org/10.1007/978-3-031-47115-5\_20},
	doi          = {10.1007/978-3-031-47115-5\_20},
	timestamp    = {Sat, 08 Jun 2024 13:13:57 +0200},
	biburl       = {https://dblp.org/rec/conf/sefm/KornMKLBD23.bib},
	bibsource    = {dblp computer science bibliography, https://dblp.org}
}

@inproceedings{Xu2019ExplainableAI,
	author       = {Feiyu Xu and
	Hans Uszkoreit and
	Yangzhou Du and
	Wei Fan and
	Dongyan Zhao and
	Jun Zhu},
	editor       = {Jie Tang and
	Min{-}Yen Kan and
	Dongyan Zhao and
	Sujian Li and
	Hongying Zan},
	title        = {Explainable {AI:} {A} Brief Survey on History, Research Areas, Approaches
	and Challenges},
	booktitle    = {Natural Language Processing and Chinese Computing - 8th {CCF} International
	Conference, {NLPCC} 2019, Dunhuang, China, October 9-14, 2019, Proceedings,
	Part {II}},
	series       = {Lecture Notes in Computer Science},
	volume       = {11839},
	pages        = {563--574},
	publisher    = {Springer},
	year         = {2019},
	url          = {https://doi.org/10.1007/978-3-030-32236-6\_51},
	doi          = {10.1007/978-3-030-32236-6\_51},
	timestamp    = {Thu, 27 Jul 2023 14:33:41 +0200},
	biburl       = {https://dblp.org/rec/conf/nlpcc/XuUDF0Z19.bib},
	bibsource    = {dblp computer science bibliography, https://dblp.org}
}

@inproceedings{Funke2020Witnessing,
	author       = {Florian Funke and
	Simon Jantsch and
	Christel Baier},
	editor       = {Armin Biere and
	David Parker},
	title        = {Farkas Certificates and Minimal Witnesses for Probabilistic Reachability
	Constraints},
	booktitle    = {Tools and Algorithms for the Construction and Analysis of Systems
	- 26th International Conference, {TACAS} 2020, Held as Part of the
	European Joint Conferences on Theory and Practice of Software, {ETAPS}
	2020, Dublin, Ireland, April 25-30, 2020, Proceedings, Part {I}},
	series       = {Lecture Notes in Computer Science},
	volume       = {12078},
	pages        = {324--345},
	publisher    = {Springer},
	year         = {2020},
	url          = {https://doi.org/10.1007/978-3-030-45190-5\_18},
	doi          = {10.1007/978-3-030-45190-5\_18},
	timestamp    = {Fri, 14 May 2021 08:34:18 +0200},
	biburl       = {https://dblp.org/rec/conf/tacas/FunkeJB20.bib},
	bibsource    = {dblp computer science bibliography, https://dblp.org}
}

@inproceedings{Lehmann2025BeyondSafety,
	author       = {Christel Baier and
	Rio Klatt and
	Sascha Kl{\"{u}}ppelholz and
	Johannes Lehmann},
	editor       = {Anne Remke and
	Bernhard Steffen},
	title        = {Backward Responsibility in Transition Systems Beyond Safety},
	booktitle    = {Formal Methods for Industrial Critical Systems - 30th International
	Conference, {FMICS} 2025, Aarhus, Denmark, August 27-28, 2025, Proceedings},
	series       = {Lecture Notes in Computer Science},
	volume       = {16040},
	pages        = {105--123},
	publisher    = {Springer},
	year         = {2025},
	url          = {https://doi.org/10.1007/978-3-032-00942-5\_6},
	doi          = {10.1007/978-3-032-00942-5\_6},
	timestamp    = {Thu, 11 Sep 2025 20:25:49 +0200},
	biburl       = {https://dblp.org/rec/conf/fmics/BaierKKL25.bib},
	bibsource    = {dblp computer science bibliography, https://dblp.org}
}

@inproceedings{PrismModelChecker,
	author       = {Marta Z. Kwiatkowska and
	Gethin Norman and
	David Parker},
	editor       = {Ganesh Gopalakrishnan and
	Shaz Qadeer},
	title        = {{PRISM} 4.0: Verification of Probabilistic Real-Time Systems},
	booktitle    = {Computer Aided Verification - 23rd International Conference, {CAV}
	2011, Snowbird, UT, USA, July 14-20, 2011. Proceedings},
	series       = {Lecture Notes in Computer Science},
	volume       = {6806},
	pages        = {585--591},
	publisher    = {Springer},
	year         = {2011},
	url          = {https://doi.org/10.1007/978-3-642-22110-1\_47},
	doi          = {10.1007/978-3-642-22110-1\_47},
	timestamp    = {Tue, 14 May 2019 10:00:43 +0200},
	biburl       = {https://dblp.org/rec/conf/cav/KwiatkowskaNP11.bib},
	bibsource    = {dblp computer science bibliography, https://dblp.org}
}

@misc{PMCVisDocker,
	author       = {Lehmann, Johannes and
	Korn, Max},
	title        = {"Backward Responsibility in Transition Systems
	Beyond Safety" PMC-Vis Integration
	},
	month        = jan,
	year         = 2026,
	publisher    = {Zenodo},
	doi          = {10.5281/zenodo.18430134},
	url          = {https://doi.org/10.5281/zenodo.18430134},
	swhid        = {swh:1:dir:a63914f6f37ec897bdf743fe254cb7600905fdc9
	;origin=https://doi.org/10.5281/zenodo.18430133;vi
	sit=swh:1:snp:e1a544327213431c4a4505be757bf8b9a399
	9f7b;anchor=swh:1:rel:d59082946a082ca45762a9fabd52
	f050d3de4920;path=pmc-vis-responsibility
	},
}

@misc{Artifact,
	author       = {Lehmann, Johannes},
	title        = {{SVaBResp}: Tool to compute Shapley-value-based
	responsibility
	},
	month        = feb,
	year         = 2026,
	publisher    = {Zenodo},
	doi          = {10.5281/zenodo.18477141},
	url          = {https://doi.org/10.5281/zenodo.18477141},
}
\end{document}